\theoremstyle{plain}
\newtheorem{theorem}{Theorem}[section]
\newtheorem{lemma}[theorem]{Lemma}
\newtheorem{proposition}[theorem]{Proposition}
\newtheorem{definition}[theorem]{Definition}
\newtheorem{assumption}[theorem]{Assumption}
\theoremstyle{remark}
\newtheorem{remark}[theorem]{Remark}
\numberwithin{equation}{section}
\newcommand{\ind}{1\!\kern-1pt \mathrm{I}}
\newcommand{\rsto}{]\!\kern-1.8pt ]}
\newcommand{\lsto}{[\!\kern-1.7pt [}
\numberwithin{equation}{section}
\newcommand{\abs}[1]{\left\vert#1\right\vert}
\newcommand{\RR}{\mathbb{R}}
\newcommand{\NN}{\mathbb{N}}
\newcommand{\cA}{\mathcal{A}}
\newcommand{\cC}{\mathcal{C}}
\newcommand{\cD}{\mathcal{D}}
\newcommand{\cF}{\mathcal{F}}
\newcommand{\cM}{\mathcal{M}}
\newcommand{\cZ}{\mathcal{Z}}
\newcommand{\massP}{\mathbf{P}}
\newcommand{\massQ}{\mathbf{Q}}
\newcommand{\massE}{\mathbf{E}}
\begin{document}

\title[Utility maximization under transaction costs and bounded random endowment]{A note on utility maximization\\ with transaction costs and random endowment: num\'eraire-based model and convex duality*\footnote{*The authors have no intention of seeking publishing opportunities for this note, which is organized only for the convenience of communication.}}

\author{Lingqi Gu}
%\address{University of Vienna, Faculty of Mathematics,Oskar-Morgenstern-Platz 1, A-1090 Wien, Austria}
%\email{lingqi.gu@univie.ac.at}
\author{Yiqing Lin}
%
%Fakult\"at f\"ur Mathematik, Universit\"at Wien, Oskar-Morgenstern Platz 1, A-1090 Wien, Austria
\author{Junjian YANG}
%\address{University of Vienna, Faculty of Mathematics,Oskar-Morgenstern-Platz 1, A-1090 Wien, Austria}
%\email{junjian.yang@univie.ac.at}
\address{Fakult\"at f\"ur Mathematik\newline \indent Universit\"at Wien\newline \indent Oskar-Morgenstern Platz 1\newline \indent A-1090 Wien, Austria\newline}
\email{lingqi.gu@univie.ac.at; junjian.yang@univie.ac.at}

\begin{abstract}
 In this note, we study the utility maximization problem on the terminal wealth under proportional transaction costs and bounded random endowment. 
 In particular, we restrict ourselves to the num\'eraire-based model and work with utility functions only supporting $\mathbb{R}_+$. 
 Under the assumption of existence of consistent price systems and  natural regularity conditions, standard convex duality results are established. 
 Precisely, we first enlarge the dual domain from the collection of martingale densities associated with consistent price systems to a set of finitely additive measures; 
   then the dual formulation of the utility maximization problem can be regarded as an extension of \cite{CSW01} to the context under proportional transaction costs.
\end{abstract}
% \begin{abstract}
% This is abstract.  
% \end{abstract}
 \keywords{Utility maximization, transaction costs, random endowment, convex duality}
% \subjclass[2010]{Primary}
\date{\today}

\maketitle

\section{Introduction}
 
\noindent Utility maximization under proportional transaction costs is a classical problem in Mathematical Finance. 
In general, this problem is investigated by two major approaches: dynamic programming and convex duality, where the latter will play a crucial role in this note.  
As a complete review of literature on this topic is too extensive, we only concentrate on those of immediate interest. 

To our best knowledge, Cvitani\'c and Karatzas \cite{CK96} are the first to apply convex duality to solve the utility maximization problem under proportional transaction costs.
They considered a num\'eraire-based model within the It\^o framework and the agent was assumed to liquidate his portfolio to the bond at the end of trading. 
In \cite{CK96}, the existence of primal solution is ensured only when the dual problem admits a suitable solution. 
In the same setting, Cvitani\'c and Wang \cite{CW01} afterward provided duality results without appealing to such assumption on the dual solution. 
They achieved this by suitably enlarging the domain of the dual problem, as Kramkov and Schachermayer did in \cite{KS99} for investigating a frictionless counterpart. 

In parallel with the num\'eraire-based model, Kabanov \cite{Kab99} introduced a more general multi-currency model based on the concept of solvency cone. 
In this framework, Deelstra et al. gave dual formulation of the multivariate utility maximization in \cite{DPT01}, when the market was associated with a continuous semimartingale of classical no-arbitrage features. 
Thereafter, a similar problem has been considered with random endowment in \cite{Bou02}, in which  liquidation is required. 
For more general market models beyond semimartingales, Campi and Owen \cite{CO11} solved the utility maximization problem with transaction costs by convex duality under existence of consistence price systems (cf.~\cite{CS06}). 
In particular, duality results in \cite{CO11} rely on the enlargement of the dual domain formed by consistent price systems to a set of finitely additive measures, which is based on the idea
%Notice that the results in \cite{CO11} are obtained by following the idea 
of \cite{CSW01, KZ03, OZ09}. 
Notice that this step is altered in the proof of a similar problem in the num\'eraire-based context (cf.~\cite{CS15duality, CSY15}), where the abstract theorem in \cite{KS99} applies, 
  which is owed to the $L^0$-bipolar property between the dual domain defined in terms of supermartingale deflators and the primal one.

The results in \cite{CO11} has been subsequently generalized by Benedetti and Campi in \cite{BC12} to the case with bounded random endowment. 
In this note, we consider a similar problem as in \cite{BC12} however for the num\'eraire-based model rather than the multi-currency one, 
  i.e., we assume that the market consists of one bond and one stock, and the investor has to liquidate all his/her position in stock at the end of trading. 
We emphasize that essentially we goes {\bf no further} than Benedetti and Campi. Indeed, by applying the approach in \cite{CSW01}, 
  we merely present how convex duality works for the problem under transaction costs with the simpler num\'eraire-based model. Moreover, it is observed that every result in \cite{CSW01} has its extension  under transaction costs.
  % should be in the same way as for a frictionless one. 
We remark that only utility functions supporting the positive half-plane are concerned in \cite{BC12} as well as in this note.  
For the results on utility functions allowing for negative wealth, we refer the reader to \cite{LY16} (submission in preparation, draft available on request).

The remainder of the article is organized as follows. 
In Section 2, we introduce the financial market model with transaction costs. Moreover, the primal and dual problems are defined. In particular, thanks to the crucial super-replication theorem proved in \cite{Sch14super} (compare also \cite{CS06}), we could enlarge the collection of martingale densities corresponding to consistent price systems to a set of finitely additive measures in a similar manner as in \cite{CSW01}.   
Then, we establish the convex duality results as in \cite{CSW01} by characterizing the primal value function and the primal optimizer with respect to the dual ones in Section 3.

\section{formulation of the problem}

\subsection{The financial model} 

 We consider a model of a financial market which consists of two assets, one bond and one stock. 
 We work in discounted terms, i.e., the price of the bond $B$ is constant and normalized to $B\equiv 1$. 
 We denote by $S=(S_t)_{0\leq t\leq T}$ the price process of the stock, which is based  
 on a filtered probability space $(\Omega, \cF, (\cF_t)_{0\leq t\leq T}, \massP)$ 
 satisfying the usual hypotheses of right continuity and saturatedness, where $\cF_0$ is assumed to be trivial.  
 Here, $T$ is a finite time horizon. Throughout the paper we make the following assumption:
 
 \begin{assumption} \label{Sassumption}
  The process $S=(S)_{0\leq t\leq T}$ is adapted to $(\cF_t)_{0\leq t\leq T}$, with c\`adl\`ag and strictly positive paths. 
 \end{assumption}
  
 We introduce proportional transaction costs $\lambda> 0$ for the trading of the stock. 
 The process $((1-\lambda)S_t,S_t)_{0\leq t\leq T}$ models the bid and ask price of the stock $S$, respectively, 
 which means that the agent has to pay a higher ask price $S_t$ to buy stock shares but only receives a lower bid price $(1-\lambda)S_t$ when selling them. 
 We assume $\lambda<1$ for obvious economic reasons.

We also assume that the agent is endowed with initial wealth $x>0$ and receives an exogenous endowment, 
whose cumulative process is denoted by $e=(e_t)_{0\leq t\leq T}$, $e_0=0$, 
assumed bounded, adapted, with $\rho:=\|e_T\|_{\infty}<\infty$. 
We note that $e_t$ can take negative values, interpreted as mandatory consumption. 
In our case, to solve an expected utility maximization problem, only the final value $e_T$ matters. 

We model \textbf{trading strategies} by $\RR^2$-valued, predictable processes $\varphi=(\varphi^0_t,\varphi^1_t)_{0\leq t\leq T}$ of finite variation, 
 where $\varphi^0_t$ and $\varphi^1_t$ denote the holdings in units of the riskless and the risky asset, respectively, after rebalancing the portfolio at time $t$.
 
To establish our model, we adopt several definitions from \cite{Sch14super} and \cite{Sch14admissible}. 

\begin{definition} 
  A strategy $\varphi = (\varphi_t^0,\varphi_t^1)_{0\leq t\leq T}$ is called {\bf self-financing under transaction costs $\lambda$}, \index{self-financing} if 
   \begin{equation} \label{SF}
     \int_s^td\varphi^0_u \leq -\int_s^tS_ud\varphi_u^{1,\uparrow} + \int_s^t(1-\lambda)S_ud\varphi_u^{1,\downarrow}
   \end{equation}
  for all $0\leq s\leq t\leq T$, where the integrals are defined via
   \begin{align*}
     \int_s^tS_ud\varphi_u^{1,\uparrow} &:= \int_s^tS_ud\varphi_u^{1,\uparrow,c} + \sum_{s<u\leq t}S_{u-}\Delta\varphi_u^{1,\uparrow} + \sum_{s\leq u<t}S_u\Delta_+\varphi_u^{1,\uparrow}, \\
     \int_s^tS_ud\varphi_u^{1,\downarrow} &:= \int_s^tS_ud\varphi_u^{1,\downarrow,c} + \sum_{s<u\leq t}S_{u-}\Delta\varphi_u^{1,\downarrow} + \sum_{s\leq u<t}S_u\Delta_+\varphi_u^{1,\downarrow}.
   \end{align*}  
\end{definition}

 The self-financing condition \eqref{SF} states that purchases and sales of the risky asset are accounted for in the riskless position: 
 \begin{align*} % \label{SFdetail}
   d\varphi^{0,c}_t&\leq-S_td\varphi^{1,\uparrow,c}_t+(1-\lambda)S_td\varphi^{1,\downarrow,c}_t, \notag \\
   \Delta\varphi^0_t&\leq-S_{t-}\Delta\varphi^{1,\uparrow}_t +(1-\lambda)S_{t-}\Delta\varphi^{1,\downarrow}_t,\notag   \\
   \Delta_+\varphi^0_t&\leq-S_t\Delta_+\varphi^{1,\uparrow}_t +(1-\lambda)S_t\Delta_+\varphi^{1,\downarrow}_t, 
 \end{align*}
 for $0\leq t\leq T$.

%\begin{definition}
 %A self-financing trading strategy under transaction costs $\lambda\in(0, 1)$ is a predictable $\RR^2$-valued finite variation process 
 %$\varphi = (\varphi_t^0, \varphi_t^1)_{0\leq t\leq T}$ 
 %such that
 %$$d\varphi^0_t\leq -S_{t}(d\varphi_t^1)^++(1-\lambda)S_{t}(d\varphi_t^1)^-,$$
%where $\varphi_t^0$ and $\varphi_t^1$ describe 
%the amount of bond and the number of stock shares held at time $t \in [0,T]$.
% Denote by $\cA^{\lambda}$ the set of all self-financing trading strategies under transaction costs $\lambda$ 
 %starting with $(\varphi_{0-}^0, \varphi_{0-}^1)=(0, 0)$. 
%\end{definition}

%\begin{remark}
 %In order to model a possible jump at $t=0$, we enlarge the time index set $[0,T]$ by adding the point $0-$. 
 %However we write $(\varphi_t^0,\varphi_t^1)_{0\leq t\leq T}$ instead of $(\varphi_t^0,\varphi_t^1)_{\{0-\}\cup[0,T]}$
%\end{remark}

\begin{definition}
 A self-financing strategy $\varphi$ is \textbf{admissible}, if its liquidation value
 \begin{eqnarray*}
   V_t^{\mathrm{liq}}(\varphi):= \varphi_t^0 + (\varphi_t^1)^+(1-\lambda)S_t - (\varphi_t^1)^-S_t 
   %              &=& \varphi_0^0 + \varphi_0^1S_0 + \int_0^t\varphi_u^1dS_u - \lambda\int_0^tS_{u}(d\varphi^1_u)^--\lambda S_t(\varphi_t^1)^+ \\
                 \geq -M, \quad  a.s., 
 \end{eqnarray*}
for some $M>0$, simultaneously for all $t\in[0,T]$. 
\end{definition}

For $x\in\RR$, we denote by $\cA^{\lambda}_{adm}(x)$ the set of all admissible self-financing trading strategies under transaction costs $\lambda$ 
 with $(\varphi_{0}^0, \varphi_{0}^1)=(x, 0)$ and $\varphi_{T}^1 = 0$ and
 $$ \cC^\lambda(x):= \left\{V^{\mathrm{liq}}_T(\varphi)\,\Big|\, \varphi = (\varphi^0,\varphi^1)\in\cA^{\lambda}_{adm}(x) \right\}. $$
As explained in \cite[Remark 4.2]{CS06}, we can assume without loss of generality that $\varphi^1_T=0$ and therefore
 $$ \cC^\lambda(x) = \left\{\varphi_T^0 \,\Big|\, \varphi = (\varphi^0,\varphi^1)\in\cA^{\lambda}_{adm}(x)\right\}. $$
Note that the restriction on trading strategies $\varphi_{T}^1 = 0$ means that all stock shares are liquidated at time $T$, 
 i.e., a trading strategy must begin and end with a cash position only.  

\vspace{3mm}

To ensure the optimization problem meaningful, the assumption of the absence of arbitrage, is required here too.
We recall some useful results of the arbitrage theory in markets with transaction costs.
% The details can be found  in \cite{GRS10}.

\begin{definition}
  Fix $0<\lambda <1$ and a price process $S=(S_t)_{0\le t\le T}$ as above.
  A $\lambda$-{\bf consistent price system} is a two dimensional strictly positive process $Z=(Z^0_t,Z^1_t)_{0\le t\le T}$ 
      with $Z^0_0=1$, that consists of a martingale $Z^0$ and a (local) martingale $Z^1$ under $\massP$ such that 
      \begin{equation}\label{J10}
       \widetilde{S}_t:=\frac{Z^1_t}{Z^0_t} \in [(1-\lambda)S_t, S_t],\qquad a.s. 
      \end{equation}
      for $0\leq t\leq T$.

  We denote by $\mathcal{Z}^{\lambda}_e(S)$ the set of $\lambda$-consistent price systems. 
  
  We say that S satisfies $(CPS^{\lambda})$,
  if there is a consistent price system for given transaction costs $\lambda \in (0,1)$.
\end{definition}

   \begin{remark}
     In the above definition, $Z^0$ defines a density process of an equivalent (local) martingale measure $\massQ\sim\massP$ for a price process $\widetilde{S}$ 
      evolving in the bid-ask spread $[(1-\lambda)S,S]$, and $Z^1=Z^0\widetilde{S}$.
   \end{remark}

In the context with transaction costs, the consistent price system plays the same role as the equivalent localmartingale measure in frictionless financial markets.  
To issue the important superreplication theorem, we have the following assumption throughout the paper:

\begin{assumption}  \label{CPSassumption}
 $S$ satisfies $(CPS^{\mu})$ for all $\mu\in(0,1)$. 
\end{assumption}

\begin{theorem}[Superreplication theorem] \label{superreplication}
  Let $S$ satisfy Assumption \ref{Sassumption} and Assumption \ref{CPSassumption}. Fix $0<\lambda<1$. 
  Let $ g\in L^0(\Omega,\cF,\massP)$ be a random variable bounded from below, i.e., $g\geq -M$ almost surely for some $M>0$. 
  
  Then $g\in\cC^{\lambda}(x)$, i.e., $(g,0)$ is the terminal value of some $\lambda$-self-financing, admissible trading strategy $(\varphi^0_t,\varphi^1_t)_{0\leq t\leq T}\in\cA^{\lambda}_{adm}(x)$, 
  if and only if $$ \massE[Z^0_Tg] \leq x, $$
  for every $\lambda$-consistent price system $(Z^0,Z^1)$.
\end{theorem}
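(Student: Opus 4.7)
The plan is to handle the two implications separately, since the necessity direction is essentially a supermartingale computation while the sufficiency direction is the deep content.

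For the \emph{necessity} direction, I would fix $\varphi = (\varphi^0,\varphi^1) \in \cA^\lambda_{adm}(x)$ with $V^{\mathrm{liq}}_T(\varphi) = \varphi^0_T = g$ (using $\varphi^1_T = 0$) and any $Z = (Z^0,Z^1) \in \cZ^\lambda_e(S)$. The key observation is that $M_t := Z^0_t \varphi^0_t + Z^1_t \varphi^1_t$ is an optional strong supermartingale. This follows from integration by parts: the contribution of $d\varphi$ is dominated by
\[
-Z^0_t\,S_t\,d\varphi^{1,\uparrow}_t + Z^0_t(1-\lambda)S_t\,d\varphi^{1,\downarrow}_t + Z^1_t\,d\varphi^{1,\uparrow}_t - Z^1_t\,d\varphi^{1,\downarrow}_t,
\]
which is $\leq 0$ because $\widetilde{S}_t = Z^1_t/Z^0_t \in [(1-\lambda)S_t, S_t]$; the remaining martingale parts come from $\varphi^0\,dZ^0 + \varphi^1\,dZ^1$. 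Admissibility $V^{\mathrm{liq}}_t(\varphi) \geq -M$ together with $Z^1 \in [(1-\lambda)Z^0 S, Z^0 S]$ gives $M_t \geq -M Z^0_t$, so a Fatou argument upgrades the local supermartingale property to a true supermartingale. Evaluating at $t=0$ and $t=T$ yields $\massE[Z^0_T g] = \massE[M_T] \leq M_0 = x$.

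For the \emph{sufficiency} direction I would argue by contradiction: assume $g \notin \cC^\lambda(x)$ while $\massE[Z^0_T g] \leq x$ for all $(Z^0,Z^1) \in \cZ^\lambda_e(S)$. The plan has three steps. First, establish that the set
\[
\cC^\lambda(x) - L^0_+(\Omega,\cF,\massP)
\]
is Fatou-closed in $L^0$; this is the technical heart, resting on Assumption \ref{CPSassumption} via $(CPS^\mu)$ for all $\mu \in (0,1)$, and is precisely what Schachermayer established in \cite{Sch14super} using a suitable compactness property for the finite-variation processes $\varphi$. Second, since $g$ is bounded below, restrict attention to the truncated problem in $L^\infty$ and apply a Kreps--Yan / Hahn--Banach separation in the $(L^\infty, L^1)$ duality to produce a strictly positive $h \in L^1_+$ with $\massE[hg] > \sup\{\massE[hf] : f \in \cC^\lambda(x)\}$. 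Third, extract a $\lambda$-consistent price system from $h$: set $Z^0$ to be the martingale with $Z^0_T \propto h$, and construct $Z^1$ by testing $\massE[hf]$ against elementary admissible strategies that buy one unit of stock on a predictable set and liquidate at $T$; the no-arbitrage-type inequalities obtained this way force $\widetilde{S} := Z^1/Z^0$ to lie in $[(1-\lambda)S, S]$ and $Z^1$ to be a local martingale. The resulting CPS yields $\massE[Z^0_T g] > x$, contradicting the hypothesis.

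The main obstacle is the closedness in step one. The finite-variation, predictable, c\`adl\`ag nature of $\varphi$ combined with the self-financing inequality makes standard Komlos-type arguments delicate: one needs to pass to the limit simultaneously in $\varphi^{1,\uparrow}$ and $\varphi^{1,\downarrow}$ and maintain the admissibility bound and the boundary condition $\varphi^1_T = 0$. Once this closedness is in hand, the separation and CPS-extraction steps proceed along the lines of \cite{CSW01, CS06}, adapted to the bid-ask structure. Since Theorem \ref{superreplication} is the num\'eraire-based version of \cite[Theorem 1.4]{Sch14super}, the cleanest route is to invoke that result directly after verifying that its hypotheses are met under Assumptions \ref{Sassumption} and \ref{CPSassumption}.
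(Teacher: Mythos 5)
Your proposal is correct and, in the end, lands exactly where the paper does: the paper's entire proof of Theorem \ref{superreplication} is the citation to \cite[Theorem~1.4]{Sch14super}, which is precisely the route you recommend in your final paragraph. The preceding sketch (supermartingale argument via $Z^0_t\varphi^0_t+Z^1_t\varphi^1_t$ for necessity; closedness, separation, and CPS extraction for sufficiency) is a faithful outline of how that cited theorem is itself proved, so there is nothing to add.
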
 

\begin{proof}
 See \cite[Theorem 1.4]{Sch14super}. 
\end{proof}

%%%%%%%%%%%%%%%%%%%%%%%%%%%%%%%%%%%%%%%%%%%%%%%%%%%%%%%%%%%%% Random Utility function U_omega %%%%%%%%%%%%%%%%%%%%%%%%%%%%%%%%%%%%%%%%%%%%%%%%%%%%%

\subsection{Optimization problem}
Now suppose the agent's preferences over terminal wealth are modeled by a utility function
$U:(0,\infty)\to\RR$, which is strictly increasing, strictly concave, continuously differentiable and 
satisfies the Inada condition:
$$ U'(0) := \lim_{x\to 0}U'(x) = \infty \quad\textnormal{ and } \quad U'(\infty) := \lim_{x\to \infty}U'(x) = 0. $$
Without loss of generality, we may assume $U(\infty)>0$ to simplify the analysis. 
Define also $U(x)=-\infty$ whenever $x\leq 0$.

\begin{assumption} \label{U(x)assumption}
 The utility function $U$ satisfies the reasonable asymptotic elasticity, i.e.
 $$ AE(U):= \limsup_{x\to\infty}\frac{xU'(x)}{{U}(x)}< 1. $$
\end{assumption}

For financial interpretation and more results about the previous assumption, we refer to \cite{KS99}.

Then, we restrict our attention to the terminal liquidation wealth.
For $x>0$, the primal problem is to maximize the expected utility function from terminal wealth
 $$ u(x):=\sup_{(\varphi^0,\varphi^1) \in \cA^{\lambda}_{adm}(x)}\massE[U(\varphi_T^0 +  e_T)]. $$
We denote $\cC^\lambda:=\cC^\lambda(0)$. 
Note that $\cC^\lambda(x)=x+\cC^\lambda$, therefore the above problem may also be written as 
 \begin{equation} \label{primal_problem}
  u(x):=\sup_{g\in\widetilde{\cC}^{\lambda}}\massE[U(x+g+e_T)],
 \end{equation}
 where the set $\widetilde{\cC}^{\lambda}$ consists of those elements of $\cC^{\lambda}$ for which the above expectation is well defined. 

Finally, in order to exclude trivial case, we have the following assumption: 
\begin{assumption} \label{u(x)assumption}
  The value function $u(x)$ is finitely valued for some $x>\rho$. 
\end{assumption}

The concavity of $u(x)$ and Assumption \ref{u(x)assumption} imply that $u(x)<\infty$ for all $x\in\RR$.

\subsection{Dual problem}

Let us denote $V: \RR_+\to\RR$ the convex conjugate function of $U(x)$ defined by 
$$ V(y):=\sup_{x>0}\{U(x)-xy\}, \quad y>0. $$
From classical results of convex analysis, we know that 
$V(y)$ is strictly decreasing, strictly convex and continuously differentiable and satisfies 
$$ V(0)=U(\infty), \quad V(\infty)=U(0). $$

We also define $I:(0,\infty)\to (0,\infty)$ the inverse function of $U'$ on $(0,\infty)$, 
which is strictly decreasing, and satisfies $I(0)=\infty$, $I(\infty)=0$ and $I = -V'$.

 For a treatment of the problem at hand, the usual dual space
  $$ \cM^{\lambda}:= \left\{Z_T^0\in L^1(\massP) \,\Big| \, (Z^0,Z^1)\in\cZ^{\lambda}_e(S) \right\}, $$
   which is a subset of $L^1$, is too small. %in terms of closedness and compactness.
 As in \cite{CSW01}, we extend the usual domain to $ba=(L^{\infty})^{*}$, the dual space of $L^{\infty}$ and define the following subset of $ba$, 
   which is equipped with the weak-star topology $\sigma(ba,L^{\infty})$,
  $$ \cD^{\lambda}:=\big\{Q\in ba\, \big| \, \|Q\|=1 \,\textnormal{ and }\, \langle Q, g \rangle\leq 0\, \textnormal{ for all }\, g\in\cC^{\lambda}\cap L^{\infty} \big\},  $$
  and $\cD^{\lambda,r}:=\cD^{\lambda}\cap L^1$, where $r$ stands for regular. 
 
 \begin{remark}
   $\cD^{\lambda}$ is clearly convex and also $\sigma(ba,L^{\infty})$-compact by Alaoglu's theorem.
 \end{remark}

 \begin{remark}
   It is easy to see that $- L^{\infty}_+ \subseteq \mathcal C^{\lambda}$, then $\cD^{\lambda}\subseteq ba_+$, hence $\mathcal D^{\lambda,r} \subseteq L^1_+$.
 \end{remark}

 \begin{remark}
   By Theorem \ref{superreplication}, each $g\in\cC^{\lambda}$ satisfies $\massE[Z^0_Tg]\leq 0$, for every consistent price system $(Z^0,Z^1)$, so $\cM^{\lambda}\neq \emptyset$.
   Since $\cM^{\lambda}\subseteq \cD^{\lambda,r}\subseteq \cD^{\lambda}$, the sets $\cD^{\lambda}$ and $\cD^{\lambda,r}$ are nonempty.
 \end{remark}
 
 \begin{lemma}  \label{D_is_sigma_closure_of_M}
   The set $\cD^{\lambda}$ is the $\sigma(ba,L^{\infty})$-closure of $\cM^{\lambda}$. 
 \end{lemma}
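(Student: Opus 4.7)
The plan is to prove the two inclusions $\overline{\cM^{\lambda}}\subseteq\cD^{\lambda}$ and $\cD^{\lambda}\subseteq\overline{\cM^{\lambda}}$ separately. The first is immediate: by the remarks preceding the lemma one has $\cM^{\lambda}\subseteq\cD^{\lambda,r}\subseteq\cD^{\lambda}$, and $\cD^{\lambda}$ is $\sigma(ba,L^{\infty})$-closed (in fact compact, by Alaoglu), so it contains the closure of $\cM^{\lambda}$.

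For the reverse inclusion I would argue by contradiction, using Hahn--Banach separation in the locally convex space $(ba,\sigma(ba,L^{\infty}))$, whose topological dual is $L^{\infty}$. First I would check that $\cM^{\lambda}$ is convex: for two CPSs $(Z^0,Z^1)$ and $(\widetilde Z^0,\widetilde Z^1)$, the pair $(\alpha Z^0+(1-\alpha)\widetilde Z^0,\alpha Z^1+(1-\alpha)\widetilde Z^1)$ is again a (local) martingale, and its ratio is a $Z^0$-weighted average of $Z^1/Z^0$ and $\widetilde Z^1/\widetilde Z^0$, hence still lies in $[(1-\lambda)S,S]$. If some $Q_0\in\cD^{\lambda}$ were not in the $\sigma(ba,L^{\infty})$-closure of the convex set $\cM^{\lambda}$, then Hahn--Banach would furnish $g\in L^{\infty}$ and $\alpha\in\RR$ with
\[
 x_0 \;:=\; \sup_{(Z^0,Z^1)\in\cZ^{\lambda}_e(S)}\massE\bigl[Z^0_T\,g\bigr] \;<\; \alpha \;<\; \langle Q_0,g\rangle.
\]

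Next I would use Theorem \ref{superreplication} to convert the separating functional into an element of $\cC^{\lambda}\cap L^{\infty}$. Since $\massE[Z^0_T]=Z^0_0=1$ for every CPS, $x_0$ is bounded above by $\|g\|_{\infty}<\infty$, and $g$ is bounded from below because $g\in L^{\infty}$. The ``if'' direction of Theorem \ref{superreplication} then gives $g-x_0\in\cC^{\lambda}$, and hence $g-x_0\in\cC^{\lambda}\cap L^{\infty}$. Combining this with $Q_0\in\cD^{\lambda}$, $Q_0\geq 0$, and $\|Q_0\|=1$ yields
\[
 \langle Q_0,g\rangle - x_0 \;=\; \langle Q_0,\,g-x_0\rangle \;\leq\; 0,
\]
contradicting $\langle Q_0,g\rangle>\alpha>x_0$.

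The only delicate ingredient is the use of the superreplication theorem: it plays the role of the bipolar theorem in the frictionless setup of \cite{CSW01}, translating the linear separation $g\in L^{\infty}$ produced by Hahn--Banach into a concrete admissible payoff $g-x_0\in\cC^{\lambda}\cap L^{\infty}$. Everything else is routine functional analysis in the duality pairing $(ba,L^{\infty})$.
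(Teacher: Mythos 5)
Your proposal is correct and follows essentially the same route as the paper: the easy inclusion via closedness of $\cD^{\lambda}$, then Hahn--Banach separation in $(ba,\sigma(ba,L^{\infty}))$ combined with the superreplication theorem (Theorem \ref{superreplication}) to turn the separating functional $g$ into an element $g-x_0$ of $\cC^{\lambda}\cap L^{\infty}$ and derive a contradiction with the definition of $\cD^{\lambda}$. The only differences are cosmetic (you verify convexity of $\cM^{\lambda}$ explicitly and separate at the supremum $x_0$ rather than at the Hahn--Banach constant $\alpha$ directly).
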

   
 \begin{proof}
   It is clear that $\cM^{\lambda}\subseteq\cD^{\lambda}$ and $\cD^{\lambda}$ is $\sigma(ba,L^{\infty})$-closed, 
     hence $$\overline{\cM^{\lambda}}^{\sigma(ba,L^{\infty})}\subseteq\cD^{\lambda}.$$ 
     
   Assume now that there exists an element $\widetilde{Q}\in\cD^{\lambda}$ satisfying $\widetilde{Q}\notin\overline{\cM^{\lambda}}^{\sigma(ba,L^{\infty})}$. 
   As $\cM^{\lambda}$ is a convex set, the $\sigma(ba,L^{\infty})$-closure $\overline{\cM^{\lambda}}^{\sigma(ba,L^{\infty})}$ is also convex. 
   By the Hahn-Banach theorem, there exists $f\in L^{\infty} = \big(ba,\sigma(ba,L^{\infty})\big)^*$, such that $\langle\widetilde{Q},f\rangle>\alpha$ and 
     $$ \langle Q,f\rangle \leq \alpha, \qquad \forall Q\in\overline{\cM^{\lambda}}^{\sigma(ba,L^{\infty})}, $$
     for some $\alpha\in\RR$. 
   In particular, $\massE[Z_T^0f]\leq \alpha$ for all $Z^0_T\in\cM^{\lambda}$, which follows by Theorem \ref{superreplication} that $f\in\cC^{\lambda}(\alpha)$, 
     therefore $f-\alpha\in\cC^{\lambda}$. 
   By the definition of $\cD^{\lambda}$, we obtain that 
     $$\langle \widetilde{Q},f-\alpha\rangle = \langle \widetilde{Q},f\rangle - \alpha \leq 0, $$
     which contradicts the fact that $\langle\widetilde{Q},f\rangle>\alpha$. 
 \end{proof}
 
 \begin{lemma}  \label{XincC}
   Let $g\in L^{\infty}$. 
   Then $g\in\cC^{\lambda}$ if and only if $\langle Q,g\rangle \leq 0$ for all $Q\in\cD^{\lambda,r}$.
 \end{lemma}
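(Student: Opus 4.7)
The plan is to prove the two implications separately, both essentially by unwinding the definitions and invoking the superreplication theorem.

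For the forward direction, suppose $g \in \cC^\lambda \cap L^\infty$. Then by the very definition of $\cD^\lambda$, we have $\langle Q, g \rangle \leq 0$ for every $Q \in \cD^\lambda$. Since $\cD^{\lambda,r} = \cD^\lambda \cap L^1 \subseteq \cD^\lambda$, the inequality holds in particular for every $Q \in \cD^{\lambda,r}$. This direction requires no work beyond noting the inclusion.

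For the converse, suppose $g \in L^\infty$ satisfies $\langle Q, g \rangle \leq 0$ for all $Q \in \cD^{\lambda,r}$. As noted in the remark preceding the lemma, $\cM^\lambda \subseteq \cD^{\lambda,r}$ (every $Z_T^0$ coming from a consistent price system is an element of $L^1_+$ with unit norm which annihilates $\cC^\lambda \cap L^\infty$ by Theorem \ref{superreplication}). Therefore, for every $Z_T^0 \in \cM^\lambda$,
$$ \massE[Z_T^0 g] = \langle Z_T^0, g\rangle \leq 0. $$
Since $g \in L^\infty$, it is bounded from below, so the superreplication theorem (Theorem \ref{superreplication}) applied with $x=0$ yields $g \in \cC^\lambda(0) = \cC^\lambda$.

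There is no real obstacle in this proof: the lemma is essentially a reformulation of the superreplication theorem through the chain of inclusions $\cM^\lambda \subseteq \cD^{\lambda,r} \subseteq \cD^\lambda$, and the only non-trivial input is Theorem \ref{superreplication} itself. The interest of the statement lies not in its proof but in the fact that it identifies $\cD^{\lambda,r}$ (a genuinely $L^1$-valued object) as the polar of $\cC^\lambda \cap L^\infty$, which is the bipolar-type property needed in subsequent duality arguments.
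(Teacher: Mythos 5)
Your proof is correct and follows exactly the paper's argument: necessity is immediate from the definition of $\cD^{\lambda}$ together with the inclusion $\cD^{\lambda,r}\subseteq\cD^{\lambda}$, and sufficiency reduces via $\cM^{\lambda}\subseteq\cD^{\lambda,r}$ to the superreplication theorem applied with $x=0$. You have merely written out the details that the paper leaves implicit.
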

 
 \begin{proof}
  The necessity follows directly from the definition of $\cD^\lambda$. 
  The sufficiency follows from Theorem \ref{superreplication}, since $\cM^{\lambda}\subseteq \cD^{\lambda,r}\subseteq \cD^\lambda$.
 \end{proof}

The following proposition collects some properties of the space $ba_+$; more information can be found in Appendix of \cite{CSW01} and references there. 

\begin{proposition} \label{propertiesofLinfty*} 
\quad 
 \begin{enumerate}
  \item The set $ba_+$ can be identified as the set of all nonnegative finitely additive bounded set functions on $\cF$, which vanish on the $\massP$-null sets. 
  \item Every $Q\in ba_+$ admits a unique decomposition in the form of $Q = Q^r+Q^s$,  
        where the regular part $Q^r$ is the maximal countably additive measure on $\cF$, that is dominated by $Q$, and 
        the singular part $Q^s$ is purely finitely additive and does not dominate any nontrivial countably additive measure. 
  \item $Q\in ba_+$ is purely finitely additive, i.e., $Q^r=0$, if and only if for every $\varepsilon>0$, 
        there exists a set $A_{\varepsilon}\in\cF$ such that $\massP(A_{\varepsilon})>1-\varepsilon$ and $Q(A_{\varepsilon})=0$. 
  \item Suppose $(Q_n)_{n\in\NN}\subseteq ba_+$ is a sequence such that $\frac{dQ_n^r}{d\massP}\to f$ almost surely for some $f\geq 0$. 
        Then any weak-star cluster point $Q$ of $(Q_n)_{n\in\NN}$ satisfies $\frac{dQ^r}{d\massP}=f$ almost surely. 
 \end{enumerate}
\end{proposition}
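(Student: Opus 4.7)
The plan is to treat parts (1)--(3) as essentially classical facts about finitely additive measures, and to devote the main effort to part (4).

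For (1), one identifies $ba \cong (L^\infty(\Omega,\cF,\massP))^*$ with the bounded finitely additive signed set functions on $\cF$ vanishing on $\massP$-null sets, restricted to its positive cone. Part (2) is the Yosida--Hewitt decomposition. For (3), the implication $(\Leftarrow)$ is short: the maximal countably additive minorant $\nu = Q^r$ of $Q$ vanishes on $\massP$-null sets, hence $\nu \ll \massP$; the hypothesis gives $\nu(A_\varepsilon) \leq Q(A_\varepsilon) = 0$, while $\nu(A_\varepsilon^c) \to 0$ by absolute continuity as $\massP(A_\varepsilon^c) \to 0$, forcing $\nu \equiv 0$. The direction $(\Rightarrow)$ follows from the standard exhaustion characterization of purely finitely additive measures.

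The substantive work is (4). Let $Q$ be a weak-star cluster point of $(Q_n)$ and write $g := dQ^r/d\massP$ and $f_n := dQ_n^r/d\massP$. The inequality $f \leq g$ a.s.\ is the easy half: for any $A \in \cF$, sequential Fatou yields
\[
 \int_A f\, d\massP \leq \liminf_n \int_A f_n\, d\massP \leq \liminf_n Q_n(A) \leq Q(A),
\]
where the last inequality uses that $Q(A)$, being the action at $\mathbf{1}_A \in L^\infty$ of a weak-star cluster point of $(Q_n)$, is automatically a sequential cluster point of the real-valued sequence $(Q_n(A))$. Hence $f\cdot\massP$ is a countably additive measure dominated by $Q$, and the maximality defining $Q^r$ gives $f\cdot\massP \leq Q^r$.

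The reverse inequality $g \leq f$ a.s.\ is the main obstacle: a.s.\ convergence of $(f_n)$ does not imply weak $L^1$ convergence, and the weak-star topology does not respect the Yosida--Hewitt decomposition, so neither ingredient can be used on its own. The idea is to localize to sets on which both difficulties disappear simultaneously. Given $\varepsilon > 0$, combine Egorov's theorem (yielding $B \in \cF$ with $\massP(B^c) < \varepsilon/2$ and $f_n \to f$ uniformly on $B$) with (3) applied to each $Q_n^s$ (yielding $A_n \in \cF$ with $Q_n^s(A_n) = 0$ and $\massP(A_n^c) < \varepsilon\,2^{-n-1}$); set $A_\varepsilon := B \cap \bigcap_n A_n$, so that $\massP(A_\varepsilon) > 1-\varepsilon$, $Q_n^s(A_\varepsilon) = 0$ for every $n$, and $f_n \to f$ uniformly on $A_\varepsilon$. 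Fix a subnet $Q_{n_\alpha} \to Q$ in $\sigma(ba, L^\infty)$. For every $\phi \in L^\infty$, the vanishing of $Q_n^s$ on $A_\varepsilon$ gives $\langle Q_n, \mathbf{1}_{A_\varepsilon}\phi\rangle = \int_{A_\varepsilon} \phi f_n\, d\massP$, uniform convergence yields $\int_{A_\varepsilon} \phi f_n\, d\massP \to \int_{A_\varepsilon} \phi f\, d\massP$, and weak-star convergence of the subnet yields $\langle Q_{n_\alpha}, \mathbf{1}_{A_\varepsilon}\phi\rangle \to \langle Q, \mathbf{1}_{A_\varepsilon}\phi\rangle$; matching the two limits shows $\langle Q, \mathbf{1}_{A_\varepsilon}\phi\rangle = \int_{A_\varepsilon} \phi f\, d\massP$ for all $\phi \in L^\infty$. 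Thus the restriction of $Q$ to $A_\varepsilon$ is the countably additive measure with density $f$, so $g = f$ a.s.\ on $A_\varepsilon$ (and $Q^s(A_\varepsilon) = 0$). Taking $\varepsilon = 1/k$ and unioning the resulting sets yields $g = f$ almost surely.
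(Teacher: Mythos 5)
Your proposal is correct. Note that the paper itself gives no proof of this proposition---it simply points to the Appendix of Cvitani\'c--Schachermayer--Wang \cite{CSW01}---and your argument is essentially the standard one found there: parts (1)--(3) are the Yosida--Hewitt facts, and for part (4) the localization via Egorov's theorem combined with part (3) (to kill the singular parts $Q_n^s$ on a common set $A_\varepsilon$ of measure at least $1-\varepsilon$, where the weak-star subnet limit can then be identified with the countably additive measure $f\cdot\massP$ and matched against $Q^r$ by maximality) is exactly the intended mechanism.
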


For any $Q\in ba_+$, we may define 
  $$ \langle Q,X\rangle:=\lim_{n\to\infty}\langle Q,X\wedge n\rangle\in [0,\infty], $$
 for all $X\in L_+^0$. 
For $X\in L^0$, set $\langle Q,X\rangle=\langle Q,X^+\rangle-\langle Q,X^-\rangle$ whenever this is well-defined. 
We observe that each $ g\in\cC^{\lambda}$ is uniformly bounded from below and thus, $\langle Q,g\rangle\leq 0$, for all $g\in\cC^{\lambda}$ and $Q\in\cD^{\lambda}$.

Now we define the dual optimization problem by 
\begin{equation} \label{defvy}
  v(y):=\inf_{Q\in\cD^{\lambda}}J(y,Q):= \inf_{Q\in\cD^{\lambda}}\left\{\massE\left[V\left(y\frac{dQ^r}{d\massP}\right)\right]+y\langle Q,e_T\rangle \right\}.
\end{equation}

% Clearly, the value function $v(y)$ is convex, since $V$ has this property and $y\mapsto y\langle Q,e_T\rangle$ is linear.

\section{Main Theorem}

In the following theorem, we see that even by adding transaction costs, the results are similar as in \cite{CSW01}.
Now we state the main result: 

\begin{theorem}
  Under the Assumption \ref{Sassumption}, \ref{CPSassumption}, \ref{U(x)assumption}, \ref{u(x)assumption},
  we have
  \begin{enumerate}
   \item $u(x)<\infty$ for all $x\in\RR$ and $v(y)<\infty$ for all $y>0$.
   \item The primal value function is continuously differentiable on $(x_0,\infty)$ and $u(x)=-\infty$ for all $x<x_0$, 
         where $x_0:=-v'(\infty)=\sup_{Q\in\cD^\lambda}\langle Q,-e_T\rangle$.
         The dual value function $v$ is continuously differentiable on $(0,\infty)$. 
   \item The functions $u$ and $v$ are conjugate in sense that 
        \begin{equation}   \label{v=u-xy}
            v(y) = \sup_{x>x_0}\{u(x)-xy\}, \quad y>0,
        \end{equation}
        \begin{equation}   \label{u=v+xy}
            u(x) = \inf_{y>0}\{v(y)+xy\}, \quad x>x_0.
        \end{equation}
   \item For all $y>0$, there exists a solution $\widehat{Q}_y\in\cD^\lambda$ to the dual problem, which is unique up to the singular part. 
         For all $x>x_0$, $\widehat{g}:= I\Big(\widehat{y}\frac{d\widehat{Q}^r_{\widehat{y}}}{d\massP}\Big)-x-e_T$ is the solution to the primal problem, 
         where $\widehat{y}=u'(x)$, which attains the infimum of $\{v(y)+xy\}$. 
  \end{enumerate}
\end{theorem}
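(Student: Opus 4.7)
The plan is to mirror the architecture of the proof in \cite{CSW01}, with Theorem~\ref{superreplication} and the bipolar-type duality of Lemma~\ref{XincC} taking over the role that the optional decomposition plays in the frictionless case. I would establish weak duality and assertion~(1) first, then dual existence via compactness, then strong duality together with the construction of the primal optimizer, and finally the differentiability statements and the identification of $x_0$.

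Step one: weak duality and finiteness. Starting from the Fenchel--Young inequality applied pointwise, taking expectations and using $\massE[dQ^r/d\massP]\leq \|Q\|=1$ together with
\[
  \massE\!\left[\frac{dQ^r}{d\massP}\, g\right] \;=\; \langle Q,g\rangle - \langle Q^s,g\rangle \;\leq\; -\langle Q^s,g\rangle
\]
for $g\in\widetilde{\cC}^\lambda$ bounded below and $Q\in\cD^\lambda$ yields the weak-duality inequality $u(x)\leq v(y)+xy$ for every admissible $x$ and every $y>0$. Choosing any $Q\in\cM^\lambda$ and using Assumption~\ref{U(x)assumption} together with Assumption~\ref{u(x)assumption} gives $v(y)<\infty$ for every $y>0$, which, combined with weak duality, also yields assertion~(1).

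Step two: dual existence. The set $\cD^\lambda$ is convex and $\sigma(ba,L^\infty)$-compact by Alaoglu. The map $Q\mapsto y\langle Q,e_T\rangle$ is weak-star continuous because $e_T\in L^\infty$, while $Q\mapsto \massE[V(y\,dQ^r/d\massP)]$ is weak-star lower semicontinuous: along a minimizing sequence one extracts a Koml\'os-type a.s.\ limit of $dQ_n^r/d\massP$, identifies it as the regular density of a cluster point via Proposition~\ref{propertiesofLinfty*}(4), and applies Fatou with the convexity of $V$. Strict convexity of $V$ then forces uniqueness of the regular part of the minimizer $\widehat Q_y\in\cD^\lambda$. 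Step three: strong duality and the primal optimizer. Perturbing $\widehat Q_y$ within $\cD^\lambda$ produces first-order conditions forcing
\[
  \widehat g \;:=\; I\!\left(\widehat y\,\frac{d\widehat Q_{\widehat y}^r}{d\massP}\right) - x - e_T
\]
to satisfy $\langle Q,\widehat g\rangle\leq 0$ for every $Q\in\cD^{\lambda,r}$; after truncation and a uniform-integrability argument secured by Assumption~\ref{U(x)assumption}, Lemma~\ref{XincC} places $\widehat g$ inside $\cC^\lambda$. The pointwise Fenchel equality at the minimizer then delivers the reverse inequality, yielding \eqref{u=v+xy}, and \eqref{v=u-xy} follows by Legendre inversion. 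Continuous differentiability of $v$ on $(0,\infty)$ and of $u$ on $(x_0,\infty)$ follows from convexity and the strict convexity of $V$, and the identification $x_0=-v'(\infty)=\sup_{Q\in\cD^\lambda}\langle Q,-e_T\rangle$ comes from Theorem~\ref{superreplication}: below this threshold no admissible strategy makes $x+\varphi^0_T+e_T>0$ almost surely.

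The main obstacle is the interplay between the regular and singular parts of a dual element: the linear term $y\langle Q,e_T\rangle$ is sensitive to $Q^s$ while the $V$-term is not, so a priori the minimizer could charge singular mass on $\{e_T<0\}$ without cost. Controlling this is the heart of the argument and requires combining Theorem~\ref{superreplication} -- which allows us to approximate singular directions by honest consistent price systems -- with the truncation step that reduces Lemma~\ref{XincC} to bounded $g$. This is precisely the place where the transaction-cost analogue of the \cite{CSW01} argument must be checked, and where Theorem~\ref{superreplication} does the work that the optional decomposition theorem does in the frictionless setting.
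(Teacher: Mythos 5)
Your architecture coincides with the paper's: weak duality via Fenchel--Young, dual existence by $\sigma(ba,L^\infty)$-compactness of $\cD^\lambda$ combined with a Koml\'os-type argument and Proposition~\ref{propertiesofLinfty*}(4), uniqueness of the regular part by strict convexity of $V$, a perturbation of $\widehat Q_y$ yielding the first-order condition, truncation plus Lemma~\ref{XincC} and the closedness of $\cC^\lambda$ under convergence in measure to place $\widehat g$ in the primal domain, and the same two-sided identification of $x_0$.

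One step, however, is a genuine gap as written: you assert that $v(y)<\infty$ follows by ``choosing any $Q\in\cM^\lambda$'' together with Assumptions~\ref{U(x)assumption} and~\ref{u(x)assumption}. For a \emph{fixed} consistent price system, $\massE[V(yZ^0_T)]$ may perfectly well be $+\infty$ even under reasonable asymptotic elasticity and finiteness of $u$; the finiteness of $v$ is a statement about the \emph{infimum} over the dual domain and cannot be obtained by evaluating at one element. The paper closes this by importing the already-established duality theorem for the transaction-cost problem \emph{without} endowment (\cite[Theorem~3.2]{CS15duality}): with $\widetilde u(x)=\sup_{g\in\widetilde\cC^\lambda}\massE[U(x+g)]$ and $\widetilde v(y)=\inf_{Z^0_T\in\cM^\lambda}\massE[V(yZ^0_T)]$, one has $\widetilde u(x)\leq u(x+\rho)<\infty$, hence $\widetilde v(y)=\sup_{x>0}\{\widetilde u(x)-xy\}<\infty$, and finally $v(y)\leq\widetilde v(y)+y\rho$. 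Some such input is indispensable, and nothing in your Step one produces it. A secondary imprecision: the map $Q\mapsto\massE\left[V\left(y\frac{dQ^r}{d\massP}\right)\right]$ is \emph{not} weak-star lower semicontinuous on $ba_+$, since the regular part behaves badly under weak-star limits; your subsequent Koml\'os/cluster-point description is the correct substitute, but the Fatou step there additionally requires the uniform integrability of the negative parts $V^-\left(y\frac{d\widetilde Q_n^r}{d\massP}\right)$, which the paper obtains as in \cite[Lemma~3.2]{KS99} from Assumption~\ref{U(x)assumption} and which you should state explicitly.
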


The rest of this section is devoted to the proof of the above main theorem. 
We split the proof in several lemmas and propositions, where we may see the use of the required assumptions for each step. 

%\begin{remark}
 %Some proofs take the similar idea as in \cite{CSW01},
 %but for convenience of the reader,
 %we still complete them.
%\end{remark}

\begin{lemma}   \label{u<infv+xy}
  For all $x\in\RR$,
  $$u(x)\leq \inf_{y>0}\inf_{Q\in\cD^\lambda}\{J(y,Q)+xy\} = \inf_{y>0}\{v(y)+xy\}.$$
\end{lemma}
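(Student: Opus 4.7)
The plan is to derive this weak-duality inequality from the pointwise Fenchel inequality $U(a)\le V(b)+ab$ combined with an estimate of the finitely additive pairing $\langle Q,\cdot\rangle$ that exploits positivity of terminal wealth. The equality on the right is immediate from the definition $v(y)=\inf_{Q\in\cD^\lambda}J(y,Q)$, so only the leftmost inequality needs justification. I would fix $x\in\RR$, $y>0$, $Q\in\cD^\lambda$, $g\in\widetilde{\cC}^\lambda$, and set $\eta:=dQ^r/d\massP$ and $\widetilde g:=x+g+e_T$.

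First I would reduce to the nontrivial case $\massE[U(\widetilde g)]>-\infty$; then necessarily $\widetilde g>0$ almost surely, because $U\equiv-\infty$ on $(-\infty,0]$. The pointwise Fenchel inequality applied with $a=\widetilde g$, $b=y\eta$, followed by taking expectations, delivers
$$\massE[U(\widetilde g)]\le \massE[V(y\eta)]+y\,\massE[\eta\widetilde g].$$

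The heart of the argument is then to show $\massE[\eta\widetilde g]\le x+\langle Q,e_T\rangle$. Since $g$ and $e_T$ are both bounded from below, the pairing $\langle Q,\cdot\rangle$ extends to them by monotone convergence on their nonnegative translates and is linear there; combining this with $\|Q\|=1$ and the bound $\langle Q,g\rangle\le 0$ recorded just before \eqref{defvy} gives
$$\langle Q,\widetilde g\rangle = x\|Q\|+\langle Q,g\rangle+\langle Q,e_T\rangle \le x+\langle Q,e_T\rangle.$$
Using the decomposition $Q=Q^r+Q^s$ from Proposition~\ref{propertiesofLinfty*}, together with $\widetilde g\ge 0$ and $Q^s\in ba_+$, I would then conclude
$$\massE[\eta\widetilde g]=\langle Q^r,\widetilde g\rangle=\langle Q,\widetilde g\rangle-\langle Q^s,\widetilde g\rangle\le \langle Q,\widetilde g\rangle\le x+\langle Q,e_T\rangle.$$

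Putting the two estimates together yields $\massE[U(\widetilde g)]\le J(y,Q)+xy$, and passing to the supremum over $g\in\widetilde{\cC}^\lambda$ and then to the infimum over $Q\in\cD^\lambda$ and $y>0$ completes the proof. The delicate step will be the second one: $\widetilde g$ is typically unbounded above and $Q$ is only finitely additive, but the sign condition $\widetilde g\ge 0$ simultaneously turns $\langle Q^s,\widetilde g\rangle$ into a favourable nonnegative error term and legitimises the monotone-convergence identifications of $\langle Q,\widetilde g\rangle$ and $\langle Q^r,\widetilde g\rangle$ that underpin the chain of inequalities.
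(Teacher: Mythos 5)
Your proposal is correct and follows essentially the same route as the paper: the pointwise Fenchel inequality $U(a)\le V(b)+ab$ with $b=y\,\frac{dQ^r}{d\massP}$, followed by the chain $\massE\bigl[\frac{dQ^r}{d\massP}(x+g+e_T)\bigr]\le\langle Q,x+g+e_T\rangle\le x+\langle Q,e_T\rangle$, which uses exactly the positivity of $x+g+e_T$, the decomposition $Q=Q^r+Q^s$ with $Q^s\ge 0$, $\|Q\|=1$, and $\langle Q,g\rangle\le 0$. Your explicit justification of the extension and additivity of the finitely additive pairing on bounded-below random variables is a slightly more careful rendering of steps the paper leaves implicit, but the argument is the same.
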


\begin{proof}
  For the case $x+g+e_T\leq 0$ on a measurable set $A \in \cF $ with $\massP(A)>0$, we get $u(x)=-\infty$, therefore the assertion satisfies trivially. 
  We only have to consider the case $x+g+e_T>0$ $\massP$-a.s. 
  As $g$ is bounded from below by $-(x+\rho)$ and $S$ satisfies $(CPS^{\mu})$ for all $\mu\in(0,1)$, it follows by \cite[Theorem 1]{Sch14admissible} that 
    $g$ can be attained by some $(x+\rho)$-admissible, self-financing trading strategy. 
    
  From the definition of $V(y)$, positivity of $x+g+e_T$, and $\langle Q,g\rangle\leq 0$, it follows
   \begin{equation} \label{UleqJ+xy}
     \begin{aligned}
      \massE[U(x+g+e_T)] &\leq \massE\left[V\left(y\frac{dQ^r}{d\massP}\right) + y\frac{dQ^r}{d\massP}(x+g+e_T)\right] \\
                         &\leq \massE\left[V\left(y\frac{dQ^r}{d\massP}\right)\right] + y\langle Q,x+g+e_T\rangle  \\
                         &\leq \massE\left[V\left(y\frac{dQ^r}{d\massP}\right)\right] + y\langle Q,e_T\rangle + xy \\
                         &= J(y,Q)+xy      
     \end{aligned}
   \end{equation}
  for all $y>0$, $g\in\widetilde{\cC}^{\lambda}$, $Q\in\cD^\lambda$. 
  Taking supremum and infimum at left-and right-hand side, respectively, we obtain the assertion. 
\end{proof}

% For the dual resolution method to the primal problem $u(x)$,
% we should first study the dual problem.

We now study the dual value function. 

\begin{lemma}
 The function $v(y)$ is finitely valued, for all $y>0$. 
\end{lemma}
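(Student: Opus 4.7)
My plan is to show both $v(y) > -\infty$ and $v(y) < \infty$, with the upper bound being the substantive direction. The lower bound is immediate from Fenchel-Young applied at $x=1$: $V(z) \geq U(1) - z$ for $z \geq 0$, so $\massE[V(y\,dQ^r/d\massP)] \geq U(1) - y\massE[dQ^r/d\massP] \geq U(1) - y$; combined with $y\langle Q, e_T\rangle \geq -y\rho$, this yields $v(y) \geq U(1) - y(1+\rho) > -\infty$ uniformly in $Q \in \cD^\lambda$.

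For the upper bound, I would exhibit a single $Q \in \cD^{\lambda}$ with $J(y, Q) < \infty$. The endowment contribution is easy to handle: because every $Q \in \cD^\lambda$ satisfies $\|Q\| = 1$ and $\|e_T\|_{\infty} \leq \rho$, one has $|y\langle Q, e_T\rangle| \leq y\rho$, so the task reduces to producing $Q$ with $\massE[V(y\, dQ^r/d\massP)] < \infty$.

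A natural candidate is $Q := Z^0_T \cdot \massP$ for some $Z^0 \in \cM^{\lambda}$, the latter set being non-empty by Assumption~\ref{CPSassumption} together with Theorem~\ref{superreplication}. Then $dQ^r/d\massP = Z^0_T$ and the task reduces to showing $\massE[V(y Z^0_T)] < \infty$. I would first note that $e_T \geq -\rho$ combined with Assumption~\ref{u(x)assumption} ($u(x_1) < \infty$ for some $x_1 > \rho$) gives $\sup_{g \in \widetilde\cC^\lambda} \massE[U((x_1-\rho) + g)] \leq u(x_1) < \infty$, reducing the finiteness question to the no-endowment primal problem at the positive initial capital $x_1 - \rho$. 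Then I would invoke a Kramkov-Schachermayer-type argument (as in \cite{KS99}): the reasonable asymptotic elasticity AE$(U) < 1$ from Assumption~\ref{U(x)assumption} furnishes a quantitative growth control on $V$ near zero of the form $V(\lambda z) \leq C(\lambda) V(z) + D(\lambda)$, and a Fenchel-Young inequality applied at a well-chosen point transfers the finiteness of the no-endowment primal value to $\massE[V(y Z^0_T)] < \infty$ for all $y > 0$.

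The main obstacle is precisely this final expectation bound. When $U(\infty) = +\infty$, $V(z)$ diverges as $z \to 0^+$, while $Z^0_T$ may take arbitrarily small values, so $\massE[V(y Z^0_T)]$ could a priori be infinite. Only the quantitative control on $V$ furnished by AE$(U) < 1$, coupled with the finiteness of $u$ at some point above $\rho$, prevents this blow-up; Assumptions~\ref{U(x)assumption} and~\ref{u(x)assumption} are both essential here and are the only place they enter the argument.
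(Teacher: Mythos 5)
Your overall strategy is the same as the paper's. The lower bound is routine in both (you use Fenchel--Young at $x=1$, the paper uses Jensen together with $\massE[dQ^r/d\massP]\le 1$; both give a bound of the form $v(y)\ge \mathrm{const}-y\rho$). For the upper bound, both you and the paper reduce to the no-endowment problem by noting that $e_T\ge-\rho$ and Assumption~\ref{u(x)assumption} give $\widetilde u(x):=\sup_{g\in\widetilde\cC^\lambda}\massE[U(x+g)]\le u(x+\rho)<\infty$, and then bound $v(y)$ by the no-endowment dual value $\widetilde v(y)$ plus $y\rho$.

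The one place your argument is not yet a proof is the final step $\massE[V(yZ^0_T)]<\infty$. Two concrete problems. First, this assertion is in general false for an arbitrary $Z^0\in\cM^{\lambda}$, even under RAE: what the no-endowment duality delivers is $\widetilde v(y)=\inf_{Z^0_T\in\cM^{\lambda}}\massE[V(yZ^0_T)]<\infty$, i.e.\ finiteness of the infimum (equivalently, existence of \emph{some} near-minimizing element with finite dual value); so your ``single candidate'' cannot be ``some'' $Z^0$ but must be produced by the duality theorem itself. Second, the mechanism you invoke --- ``a Fenchel--Young inequality applied at a well-chosen point transfers the finiteness of the primal value to the dual'' --- points in the wrong direction: Fenchel--Young reads $V(z)\ge U(x)-xz$, which bounds $V$ from \emph{below} and only yields the easy inequality $\widetilde v(y)\ge\sup_{x>0}\{\widetilde u(x)-xy\}$. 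The upper bound $\widetilde v(y)\le\sup_{x>0}\{\widetilde u(x)-xy\}<\infty$ is the hard half of the duality and requires the full machinery of \cite{KS99} adapted to transaction costs (minimax/bipolar arguments together with the RAE estimates of Lemma~\ref{yI<VandV<CV}); the paper does not reprove it but simply cites \cite[Theorem 3.2]{CS15duality}, which states $\widetilde v(y)=\sup_{x>0}\{\widetilde u(x)-xy\}$. If you replace your sketch by that citation, your argument becomes exactly the paper's: $v(y)\le\widetilde v(y)+y\rho<\infty$.
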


\begin{proof}
 By Jensen's inequality, the fact that $V$ is decreasing and $\massE\big[\frac{d\massQ^r}{d\massP}\big]\leq 1$, we have
  \begin{equation} \label{v(y)geqV(y)-y}
    \begin{aligned}
      v(y) &= \inf_{Q\in\cD^\lambda}\left\{\massE\left[V\left(y\frac{dQ^r}{d\massP}\right)\right]+y\langle Q,e_T\rangle \right\}  \\ 
           &\geq \inf_{Q\in\cD^\lambda}V\left(y\massE\left[\frac{dQ^r}{d\massP}\right]\right) -y\rho 
            \geq V\left(y\right) -y\rho   
            > -\infty    
    \end{aligned}
  \end{equation}
 for all $y>0$. 
 
 To show $v(y)<\infty$, we need to recall the duality result without random endowment in \cite{CS15duality} (cf.~in \cite[Theorem 3.2]{CS15duality}). 
 To adapt the setting in that article, we denote by $\widetilde{u}(x)$ and $\widetilde{v}(y)$ be the primal and dual value function, respectively, i.e., 
  \begin{equation*}
   \begin{aligned}
    & \widetilde{u}(x):= \sup_{g\in\widetilde{\cC}^{\lambda}}\massE[U(x+g)], \\
    & \widetilde{v}(y):=\inf_{Z_T^0\in\cM^{\lambda}}\massE\left[V\left(yZ^0_T\right)\right].   
   \end{aligned}
  \end{equation*}
 
 By Assumption \ref{u(x)assumption}, we obtain 
 \begin{equation}
   \widetilde{u}(x) \leq \sup_{g\in\widetilde{\cC}^{\lambda}}\massE[U(x+g+\rho+e_T)] = u(x+\rho) < \infty,
 \end{equation}
 for all $x>0$.
 On the other hand, by \cite[Theorem 3.2]{CS15duality},  
  $$ \widetilde{v}(y) = \sup_{x>0}\{\widetilde{u}(x)-xy\}=\widetilde{u}(\widehat{x}_y)-\widehat{x}_yy<\infty, $$
 for all $y>0$. 
 It follows from
  \begin{eqnarray*}
   v(y) &=& \inf_{Q\in\cD^\lambda}\left\{\massE\left[V\left(y\frac{dQ^r}{d\massP}\right)\right]+y\langle Q,e_T\rangle \right\} \\
        &\leq& \min_{Z^0_T\in\cM^\lambda}\massE\left[V\left(yZ_T^0\right)\right]+y\rho  \\
        &=& \widetilde{v}(y) + y\rho,  
 \end{eqnarray*}
 that $v(y)<\infty$, for all $y>0$. 
\end{proof}

\begin{lemma}
 For any $y>0$, the infimum of the left-hand side of \eqref{defvy} is attained by some $\widehat{Q}_y\in\cD^\lambda$. 
\end{lemma}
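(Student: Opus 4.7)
The plan is to follow the standard weak-star compactness recipe for dual problems, in the spirit of \cite{CSW01}: take a minimizing sequence, pass to convex combinations whose regular parts converge almost surely, extract a $\sigma(ba,L^\infty)$-cluster point, and verify attainment by lower semicontinuity of $J(y,\cdot)$.

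First, I pick a minimizing sequence $(Q_n)_{n\in\NN}\subset\cD^\lambda$ with $J(y,Q_n)\downarrow v(y)$. The map $Q\mapsto J(y,Q)$ is convex, since $V$ is convex while $Q\mapsto dQ^r/d\massP$ and $Q\mapsto\langle Q,e_T\rangle$ are linear; hence any tail convex combination lies in $\cD^\lambda$ and remains minimizing. The densities $f_n:=dQ_n^r/d\massP$ satisfy $f_n\geq 0$ and $\massE[f_n]\leq\|Q_n\|=1$, so a Koml\'os-type argument yields convex combinations $\widetilde Q_n\in\mathrm{conv}(Q_n,Q_{n+1},\ldots)\subset\cD^\lambda$ whose regular parts $\widetilde f_n:=d\widetilde Q_n^r/d\massP$ converge almost surely to some $f\in L^0_+$ with $\massE[f]\leq 1$.

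Next, by Alaoglu's theorem $\cD^\lambda$ is $\sigma(ba,L^\infty)$-compact, so I extract a weak-star cluster point $\widehat Q_y\in\cD^\lambda$ of $(\widetilde Q_n)$. Proposition~\ref{propertiesofLinfty*}(4) then guarantees $d\widehat Q_y^r/d\massP=f$ almost surely. Since $e_T\in L^\infty$, the functional $Q\mapsto\langle Q,e_T\rangle$ is weak-star continuous, so along a subnet realizing the cluster point one has $\langle\widetilde Q_{n_\alpha},e_T\rangle\to\langle\widehat Q_y,e_T\rangle$. For the $V$-term, convexity of $V$ provides the affine minorant $V(a)\geq V(y)-I(y)(a-y)$ for $a\geq 0$, so that $V(y\widetilde f_n)+I(y)y\widetilde f_n$ is bounded below by the constant $V(y)+I(y)y$; Fatou's lemma combined with the pointwise continuity of $V$ then delivers
\[
\massE[V(yf)]+I(y)y\,\massE[f]\;\leq\;\liminf_{n}\Bigl(\massE[V(y\widetilde f_n)]+I(y)y\,\massE[\widetilde f_n]\Bigr).
\]

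The step I expect to require the most care is reconciling this Fatou estimate with the possible escape of mass from the regular to the singular part, which is precisely where the enlargement to $ba$ pays off. Since $\|\widehat Q_y\|=1$ while $\massE[f]$ may be strictly smaller than $\liminf\massE[\widetilde f_n]$, the deficit is absorbed by $\widehat Q_y^s$, whose contribution $y\langle\widehat Q_y^s,e_T\rangle$ is controlled in modulus by $y\rho\|\widehat Q_y^s\|=y\rho(1-\massE[f])$ and is already tracked by the weak-star limit of $\langle\widetilde Q_n,e_T\rangle$. Combining these observations with $J(y,\widetilde Q_n)\to v(y)$ yields $J(y,\widehat Q_y)\leq v(y)$, so $\widehat Q_y\in\cD^\lambda$ attains the infimum.
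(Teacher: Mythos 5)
Your overall architecture is the same as the paper's: minimizing sequence, Koml\'os-type convex combinations so that the regular densities converge almost surely, a $\sigma(ba,L^\infty)$-cluster point via Alaoglu, identification of its regular part through Proposition~\ref{propertiesofLinfty*}(4), and a Fatou argument. The genuine gap is in your treatment of the $V$-term. Anchoring the affine minorant at the point $y$ gives
\[
\massE\left[V(yf)\right]+I(y)\,y\,\massE[f]\;\leq\;\liminf_{n}\left(\massE\left[V(y\widetilde f_n)\right]+I(y)\,y\,\massE[\widetilde f_n]\right),
\]
hence only
\[
\massE\left[V(yf)\right]\;\leq\;\liminf_{n}\massE\left[V(y\widetilde f_n)\right]+I(y)\,y\,\delta,\qquad \delta:=\lim_n\massE[\widetilde f_n]-\massE[f]\geq 0 .
\]
When mass escapes to the singular part, $\delta>0$ and the extra term $I(y)\,y\,\delta$ is strictly positive, so you only obtain $J(y,\widehat Q_y)\leq v(y)+I(y)\,y\,\delta$, which does not prove attainment. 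Your proposed remedy --- that the deficit is ``absorbed'' by $\widehat Q_y^s$ through the endowment term --- does not work: along the subnet one has $\langle\widetilde Q_{n},e_T\rangle\to\langle\widehat Q_y,e_T\rangle$ exactly, so the endowment contributions match in the limit and leave no slack with which to cancel the unrelated quantity $I(y)\,y\,\delta$; the bound $y\rho\|\widehat Q_y^s\|$ has the wrong sign and the wrong magnitude to help.

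The step can be repaired in either of two ways. The paper's route is to establish, as in \cite[Lemma 3.2]{KS99}, the uniform integrability of $\left\{V^-\left(y\,d\widetilde Q_n^r/d\massP\right)\right\}_{n}$ (this follows from $V(a)/a\to 0$ as $a\to\infty$, a consequence of the Inada conditions, together with $\massE[d\widetilde Q_n^r/d\massP]\leq 1$), after which Fatou's lemma applies directly to $V\left(y\,d\widetilde Q_n^r/d\massP\right)$ and yields $\massE[V(yf)]\leq\liminf_n\massE[V(y\widetilde f_n)]$ with no error term. Alternatively, your affine-minorant idea can be saved by anchoring the tangent at an arbitrary $y_*$ instead of $y$: the error term then becomes $I(y_*)\,y\,\delta$ with $\delta\leq 1$, and letting $y_*\to\infty$ (so that $I(y_*)\to 0$) removes it. As written, however, the final assembly of the proof does not go through.
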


\begin{proof}
 Let $(Q_n)_{n\in\NN}\subseteq \cD^\lambda$ be the minimizing sequence, i.e. 
 $$ v(y)=\lim_{n\to\infty}\left\{\massE\left[ V\left(
 y\frac{dQ_n^r}{d\massP}\right)
 \right]+y\langle Q_n,e_T \rangle  \right\}. $$
 
Since $\cD^\lambda$ is convex and $\big(\frac{dQ^r_n}{d\massP}\big)_{n\in\NN}$ is $L^1$-bounded, 
 we can find a sequence $(\widetilde{Q}_n)_{n\in\NN}$ with $\widetilde{Q}_n\in\textnormal{conv}(Q_k;k\geq n)$
 such that $\frac{d\widetilde{Q}_n^r}{d\massP}$ converges almost surely to some $f\geq 0$. 
 
 Clearly $\abs{\langle \widetilde{Q}_n,e_T\rangle}\leq \rho $. 
 Then we can extract a subsequence of $\widetilde{Q}_n$, which is still denoted by $\widetilde{Q}_n$, 
 such that $\langle \widetilde{Q}_n,e_T\rangle$ converges. 
 
Note that $\cD^\lambda$ is $\sigma(ba,L^\infty)$-compact, 
thus the sequence $(\widetilde{Q}_n)_{n\in\NN}$ has a cluster point $\widehat{Q}_y\in\cD^\lambda$. 
 From Proposition \ref{propertiesofLinfty*} (4) we have 
 $$ \frac{d\widehat{Q}^r_y}{d\massP} = f = \lim_{n\to\infty}\frac{d\widetilde{Q}^r_n}{d\massP}. $$
 
 Similarly to \cite[Lemma 3.2]{KS99}, we obtain the uniform integrability of $\left\{V^-\left(y\frac{d\widetilde{Q}_n^r}{d\massP}\right)\right\}_{n\in\NN}$. 
 By Fatou's Lemma, we have
 $$ \liminf_{n\to\infty}\massE\left[V\left( y\frac{d\widetilde{Q}_n^r}{d\massP} \right)\right] \geq \massE\left[V\left( y\frac{d\widehat{Q}_y^r}{d\massP} \right)\right]. $$
 Since $\langle \widehat{Q}_y, e_T\rangle$ is a cluster point of $(\langle \widetilde{Q}_n, e_T\rangle)_{n\in\NN}$, which converges, we have 
 $$ \langle \widehat{Q}_y, e_T\rangle = \lim_{n\to\infty} \langle \widetilde{Q}_n, e_T\rangle. $$
 Hence,
 \begin{eqnarray*}
   J(y,\widehat{Q}_y) &=& \massE\left[V\left(y\frac{d\widehat{Q}_y^r}{d\massP}\right)\right] + y\langle \widehat{Q}_y,e_T \rangle  \\
                      &\leq& \liminf_{n\to\infty}\left\{ \massE\left[V\left( y\frac{d\widetilde{Q}_n^r}{d\massP} \right)\right] + y\langle \widetilde{Q}_n,e_T \rangle \right\} \\
                      &\leq& \lim_{n\to\infty} \left\{ \massE\left[V\left( y\frac{dQ_n^r}{d\massP} \right)\right] + y\langle Q_n,e_T \rangle \right\} \\
                      &=& v(y), 
 \end{eqnarray*}
 which gives the optimality of $\widehat{Q}_y\in\cD^\lambda$.
\end{proof}

\begin{lemma}
 The solution of the dual problem might not be unique, but its countably additive part is unique. 
\end{lemma}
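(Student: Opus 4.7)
The plan is to exploit the strict convexity of $V$ together with the affine linearity of the Yosida--Hewitt decomposition $Q = Q^r + Q^s$ on $ba_+$ recalled in Proposition \ref{propertiesofLinfty*}. The dual functional $J(y,\cdot)$ splits into a piece that is strictly convex in $Q^r$, namely $\massE[V(y\,dQ^r/d\massP)]$, plus the affine piece $y\langle Q, e_T\rangle$. Hence any two minimizers must agree on the strictly convex coordinate, while in the affine coordinate they are only constrained to give the same value of $\langle Q, e_T\rangle$; this is the heart of the statement.

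Concretely, suppose $\widehat{Q}_1, \widehat{Q}_2 \in \cD^\lambda$ are two dual optimizers and, aiming at a contradiction, that $d\widehat{Q}_1^r/d\massP \neq d\widehat{Q}_2^r/d\massP$ on a set of positive $\massP$-measure. By convexity of $\cD^\lambda$ (which is itself a consequence of the fact that $\cC^\lambda$ is a convex cone), the combination $Q_\alpha := \alpha \widehat{Q}_1 + (1-\alpha)\widehat{Q}_2$ lies in $\cD^\lambda$ for every $\alpha \in (0,1)$, and by linearity of the decomposition $Q \mapsto Q^r$ one has $dQ_\alpha^r/d\massP = \alpha\, d\widehat{Q}_1^r/d\massP + (1-\alpha)\, d\widehat{Q}_2^r/d\massP$ almost surely. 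The strict convexity of $V$ on $(0,\infty)$ then gives the strict integral inequality
$$\massE\left[V\left(y\tfrac{dQ_\alpha^r}{d\massP}\right)\right] < \alpha\,\massE\left[V\left(y\tfrac{d\widehat{Q}_1^r}{d\massP}\right)\right] + (1-\alpha)\,\massE\left[V\left(y\tfrac{d\widehat{Q}_2^r}{d\massP}\right)\right].$$
Since $\langle Q, e_T\rangle$ is affine in $Q$, adding $y\langle Q_\alpha, e_T\rangle$ preserves the strict inequality, so $J(y,Q_\alpha) < \alpha J(y,\widehat{Q}_1) + (1-\alpha) J(y,\widehat{Q}_2) = v(y)$, contradicting $v(y) = \inf_{Q \in \cD^\lambda} J(y,Q)$. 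Therefore $\widehat{Q}_1^r = \widehat{Q}_2^r$ $\massP$-a.s., which is the asserted uniqueness of the countably additive part.

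As for the qualifier that the full minimizer need not be unique, I would only add a brief remark: the singular component $Q^s$ enters $J(y,Q)$ solely through the linear pairing $\langle Q^s, e_T\rangle$, and the constraint $Q \in \cD^\lambda$ is a polar (hence convex) constraint that generally admits many singular components realizing the same value of this pairing; any two such components, combined with the unique $Q^r$, produce distinct dual optimizers. The only technical point in the proof is to justify the strict inequality on the set where the two regular densities disagree: this is immediate since $V$ is strictly convex on $(0,\infty)$, but one must treat carefully the set where either density vanishes (where $V$ may take the value $U(\infty)$); however, on such a set strict inequality is not needed because the contribution to the integrals is equal, so the strict inequality survives on the positive-measure set of disagreement where both densities are strictly positive, and this is enough to close the contradiction.
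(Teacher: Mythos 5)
Your proof is correct and follows essentially the same route as the paper: both arguments take a convex combination of two putative minimizers, use the linearity of $Q \mapsto Q^r$ and the strict convexity of $V$ to obtain a strictly smaller value of $J(y,\cdot)$, contradicting optimality. Your additional care about the set where a density vanishes and your remark on why the singular part need not be unique are sensible refinements that the paper omits, but they do not change the argument.
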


\begin{proof}
 Assume that $Q_1$ and $Q_2$ are two minimizers such that $Q_1^r\neq Q_2^r$. 
 Let $Q:=\frac{1}{2}Q_1+\frac{1}{2}Q_2\in\cD^\lambda$. 
 By the strict convexity of $V$, 
 $$ \massE\left[V\left( y\frac{dQ^r}{d\massP} \right)\right]<\frac{1}{2}\massE\left[V\left( y\frac{dQ^r_1}{d\massP} \right)\right]+\frac{1}{2}\massE\left[V\left( y\frac{dQ^r_2}{d\massP}\right)\right],$$
 hence, 
 $$ J(y,Q)< \frac{1}{2}J(y,Q_1)+\frac{1}{2}J(y,Q_2)=J(y,\widehat{Q}_y), $$
 which is in contradiction to the optimality of $\widehat{Q}_y$.
\end{proof}

\begin{lemma} \label{convexvy}
 The dual value function $v(\cdot)$ is strictly convex .
\end{lemma}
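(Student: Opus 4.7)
The plan is the standard mixing argument for strict convexity of the dual value function. Fix $0<y_1<y_2$ and $\alpha\in(0,1)$, set $y_\alpha:=\alpha y_1+(1-\alpha)y_2$, and let $\widehat{Q}_1,\widehat{Q}_2\in\cD^{\lambda}$ be the dual optimizers at $y_1,y_2$ supplied by the preceding existence lemma. First I would form the rescaled convex combination
\[ Q^* := \frac{\alpha y_1}{y_\alpha}\widehat{Q}_1 + \frac{(1-\alpha)y_2}{y_\alpha}\widehat{Q}_2 \in \cD^{\lambda}, \]
feasible because its positive weights sum to $1$ and $\cD^{\lambda}$ is convex. The rescaling is designed so that
\[ y_\alpha\frac{dQ^{*,r}}{d\massP} = \alpha\Bigl(y_1\frac{d\widehat{Q}_1^r}{d\massP}\Bigr) + (1-\alpha)\Bigl(y_2\frac{d\widehat{Q}_2^r}{d\massP}\Bigr) \]
is a genuine convex combination with weights $\alpha,1-\alpha$, while the endowment contribution decouples exactly as $y_\alpha\langle Q^*,e_T\rangle = \alpha y_1\langle\widehat{Q}_1,e_T\rangle + (1-\alpha)y_2\langle\widehat{Q}_2,e_T\rangle$.

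Next I would invoke the pointwise strict convexity of $V$ to get
\[ V\Bigl(y_\alpha\frac{dQ^{*,r}}{d\massP}\Bigr) \leq \alpha V\Bigl(y_1\frac{d\widehat{Q}_1^r}{d\massP}\Bigr) + (1-\alpha) V\Bigl(y_2\frac{d\widehat{Q}_2^r}{d\massP}\Bigr), \]
strictly on the set $A:=\{y_1\frac{d\widehat{Q}_1^r}{d\massP}\neq y_2\frac{d\widehat{Q}_2^r}{d\massP}\}$. Taking expectations and combining with the endowment identity yields
\[ v(y_\alpha) \leq J(y_\alpha,Q^*) \leq \alpha v(y_1) + (1-\alpha) v(y_2), \]
which is strict as soon as $\massP(A)>0$.

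The remaining task, and the main obstacle, is to rule out the degenerate case $\massP(A)=0$. If $y_1\frac{d\widehat{Q}_1^r}{d\massP}=y_2\frac{d\widehat{Q}_2^r}{d\massP}$ almost surely, then $Q^*$ is itself a dual minimizer at $y_\alpha$; by the uniqueness of the regular part from the preceding lemma, the map $y\mapsto y\frac{d\widehat{Q}_y^r}{d\massP}$ must be almost surely a single, $y$-independent random variable $h$ throughout $[y_1,y_2]$. I would then resolve this by perturbation in two sub-cases. If $h=0$ almost surely, then mixing $\widehat{Q}_y$ with any consistent price system $Z\in\cM^{\lambda}$ (available by Assumption~\ref{CPSassumption}) produces, through the Inada identity $V'(0)=-\infty$, an infinite initial directional gain in the $V$-contribution that dwarfs the bounded change in the endowment term, contradicting optimality. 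If instead $\massP(h>0)>0$, the first-order conditions of $\widehat{Q}_1$ at $y_1$ and $\widehat{Q}_2$ at $y_2$ tested against each other, together with $\|\widehat{Q}_y^r\|\leq 1$ and the bound $|\langle Q^s,e_T\rangle|\leq \rho\|Q^s\|$, would produce a system of inequalities incompatible with the strict monotonicity of $V$.

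The main obstacle is precisely this degenerate case: the bookkeeping involving both the regular and singular components of $\widehat{Q}_y$ and the endowment bound $\rho$ is the delicate point, whereas the derivation of $v(y_\alpha)\leq \alpha v(y_1)+(1-\alpha) v(y_2)$ is routine once the correct rescaling of the mixture is identified.
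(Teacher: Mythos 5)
Your mixing argument with the rescaled combination $Q^{*}$ is exactly what the paper's one-sentence proof (``it follows directly from the strict convexity of $V$'') must mean, and that part of your write-up is correct and more careful than the paper's: the rescaling makes $y_{\alpha}\frac{dQ^{*,r}}{d\massP}$ a genuine $(\alpha,1-\alpha)$-combination, the endowment term is linear, and you obtain $v(y_{\alpha})\le\alpha v(y_1)+(1-\alpha)v(y_2)$ with strictness whenever $\massP\bigl(y_1\frac{d\widehat{Q}_1^r}{d\massP}\ne y_2\frac{d\widehat{Q}_2^r}{d\massP}\bigr)>0$. You are also right that the degenerate case is the real issue, and your perturbation towards a consistent price system does dispose of the sub-case $h=0$: the Inada condition $I(0+)=\infty$ makes the marginal decrease of the $V$-term infinite while the endowment term moves by at most $2\rho y\varepsilon$.

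The gap is the sub-case $\massP(h>0)>0$: the ``system of inequalities'' you invoke does not close. Writing $t=y_2/y_1>1$ and $a:=\langle\widehat{Q}_1,e_T\rangle-\langle\widehat{Q}_2,e_T\rangle$, testing $\widehat{Q}_2$ as a competitor at level $y_1$ and $\widehat{Q}_1$ as a competitor at level $y_2$ gives
\begin{equation*}
\massE[V(h/t)-V(h)]\ \ge\ y_1 a, \qquad \massE[V(th)-V(h)]\ \ge\ -y_2 a,
\end{equation*}
and eliminating $a$ yields $\tfrac{t}{1+t}\massE[V(h/t)]+\tfrac{1}{1+t}\massE[V(th)]\ge\massE[V(h)]$, which is precisely Jensen's inequality for the convex map $s\mapsto V(sh)$ (note $\tfrac{t}{1+t}\cdot\tfrac{1}{t}+\tfrac{1}{1+t}\cdot t=1$); so the two first-order conditions are automatically compatible and no contradiction with the monotonicity of $V$ can be extracted this way. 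As far as I can see, the degenerate case has to be excluded through the primal side: if $v$ were affine on $[y_1,y_2]$, then at $x^{*}:=-v'(y_1)=-v'(y_2)$ both $y_1$ and $y_2$ would lie in the superdifferential of $u$ at $x^{*}$, while the common primal optimizer $\widehat{g}=I(h)-x^{*}-e_T$ makes $u$ differentiable there (the concave minorant $x\mapsto\massE[U(x+\widehat{g}+e_T)]$ touches $u$ at $x^{*}$), forcing $y_1=y_2=\massE[h]$. That argument, however, uses the identification of the primal optimizer, which the paper only develops after this lemma, so your proof cannot be completed at this point without reordering the material; to be fair, the paper's own proof silently skips the degenerate case altogether.
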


\begin{proof}
 It follows directly from the strict convexity of the function $V$.  
\end{proof}

\begin{proposition} \label{UI}
 For all $y>0$, $ \frac{d{\widehat Q}_y^r}{d\massP} I \Big( (y-\varepsilon)  \frac{d{\widehat Q}_y^r}{d\massP} \Big) $
 is uniformly integrable for  sufficiently small  $\varepsilon >0$.            
\end{proposition}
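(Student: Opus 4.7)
The plan is to dominate the entire family $\bigl\{Y\,I((y-\varepsilon)Y): 0<\varepsilon<\varepsilon_0\bigr\}$, where $Y:=d\widehat{Q}_y^r/d\massP$ and $Z:=yY$, by a single $\massP$-integrable random variable; this will immediately give uniform integrability. Since $v(y)<\infty$, the expectation $\massE[V(Z)]$ is a finite real number, so in particular both $V^+(Z)$ and $V^-(Z)$ lie in $L^1(\massP)$.

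First, I fix $\mu\in(0,1)$ (to be chosen close to $1$) and set $\varepsilon_0:=(1-\mu)y>0$. For every $\varepsilon\in(0,\varepsilon_0)$ one has $(y-\varepsilon)Y\ge \mu Z$ pointwise, so the monotonicity of $I$ yields
\[
Y\,I\bigl((y-\varepsilon)Y\bigr)\;\le\; Y\,I(\mu Z)\;=\;\frac{1}{y}\,Z\,I(\mu Z).
\]
Hence the task reduces to proving that $Z\,I(\mu Z)\in L^1(\massP)$ for some (equivalently, any) $\mu\in(0,1)$ sufficiently close to $1$. Next, I use the identity $I=-V'$ and the monotonicity of $I$ to write
\[
V(\mu^2 Z)-V(\mu Z)\;=\;\int_{\mu^2 Z}^{\mu Z} I(s)\,ds\;\ge\;\mu(1-\mu)\,Z\,I(\mu Z),
\]
and combine it with $V(\mu Z)\ge V(Z)$ (since $V$ is decreasing) to obtain the pointwise bound
\[
Z\,I(\mu Z)\;\le\;\frac{V^+(\mu^2 Z)+V^-(Z)}{\mu(1-\mu)}.
\]

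The main obstacle is then the integrability of $V^+(\mu^2 Z)$, since $V^-(Z)\in L^1(\massP)$ was already noted. This is precisely where Assumption~\ref{U(x)assumption} is needed: from $AE(U)<1$, the classical growth estimate for the conjugate (see, e.g., \cite[Cor.~6.1]{KS99}) provides, for every $\lambda\in(0,1)$, a constant $C_\lambda>0$ with $V(\lambda y')\le C_\lambda\bigl(V(y')\vee 1\bigr)$ for all $y'>0$. Applied with $\lambda=\mu^2$ and $y'=Z$, this gives
\[
V^+(\mu^2 Z)\;\le\; C_{\mu^2}\bigl(V^+(Z)+1\bigr)\;\in\; L^1(\massP).
\]
Hence $y^{-1}Z\,I(\mu Z)$ is an $L^1(\massP)$-dominator for the whole family, so uniform integrability follows. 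I expect the AE-growth step to be the only subtle point; the rest is essentially a comparison-and-domination argument.
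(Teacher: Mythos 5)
Your proof is correct and follows essentially the same strategy as the paper: dominate the whole family by a single integrable random variable built from $V\big(y\,\frac{d\widehat{Q}_y^r}{d\massP}\big)$, with Assumption~\ref{U(x)assumption} entering through the growth estimate for $V$ near zero from \cite{KS99}. The only cosmetic difference is that you derive the key bound $Z I(\mu Z)\leq \frac{1}{\mu(1-\mu)}\big(V^+(\mu^2 Z)+V^-(Z)\big)$ from $I=-V'$ and the fundamental theorem of calculus, whereas the paper invokes the ready-made inequality $yI(y)<\tfrac{\gamma}{1-\gamma}V(y)$ of Lemma~\ref{yI<VandV<CV} and splits explicitly into the regions where $y\frac{d\widehat{Q}_y^r}{d\massP}$ is smaller or larger than $y_0$; both routes reduce to the integrability of $\big|V\big(y\frac{d\widehat{Q}_y^r}{d\massP}\big)\big|$, which, as you note, follows from the finiteness of $v(y)$.
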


To prove this proposition, we recall a result from \cite{KS99}.

\begin{lemma} \label{yI<VandV<CV}
 Under Assumption \ref{U(x)assumption}, there exist $y_0 > 0$ and $0 < \gamma < 1$  
 such that
$$ yI(y) < \frac{ \gamma}{1-\gamma}V(y) \quad \textnormal{and} \quad V(\beta y) < \beta^{-\frac{\gamma}{1-\gamma}}V(y) \quad $$
for all $0< y < y_0$ and $0<\beta<1$.
\end{lemma}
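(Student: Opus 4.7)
The plan is to extract the two inequalities directly from the asymptotic elasticity hypothesis by translating between $U$ and its conjugate $V$ via $I = (U')^{-1}$, using the standard identities $V(y) = U(I(y)) - yI(y)$ and $V'(y) = -I(y)$.

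First I pick any $\gamma$ with $AE(U) < \gamma < 1$; this is possible by Assumption \ref{U(x)assumption}. By the definition of $AE(U)$, there exists $x_0 > 0$ such that $xU'(x) < \gamma\, U(x)$ for all $x \geq x_0$. Setting $x = I(y)$ (so that $y = U'(x)$) and noting $I(y) \to \infty$ as $y \to 0^+$, I choose $y_1 > 0$ small enough that $I(y) \geq x_0$ for $0 < y < y_1$. This yields
\[
y\, I(y) < \gamma\, U(I(y)), \qquad 0 < y < y_1.
\]
Since $U(\infty) > 0$ by the normalization stated above Assumption \ref{U(x)assumption}, and $V(y) \to U(\infty)$ as $y \to 0^+$, I can shrink $y_1$ further so that $V(y) > 0$ on $(0, y_1)$. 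Plugging the Legendre identity $U(I(y)) = V(y) + y\,I(y)$ into the previous display gives
\[
y\, I(y) < \gamma\bigl(V(y) + y\, I(y)\bigr),
\]
which, after rearrangement and using $1-\gamma > 0$, produces exactly the first inequality
\[
y\, I(y) < \frac{\gamma}{1-\gamma}\, V(y), \qquad 0 < y < y_1.
\]

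For the second inequality I integrate. Since $V$ is continuously differentiable with $V'(y) = -I(y)$, the first inequality can be rewritten on $(0, y_1)$ as
\[
\frac{d}{dy}\log V(y) \;=\; \frac{V'(y)}{V(y)} \;=\; -\frac{I(y)}{V(y)} \;>\; -\frac{\gamma}{(1-\gamma)\, y}.
\]
Fix $y \in (0, y_1)$ and $\beta \in (0,1)$, so that $\beta y < y < y_1$. Integrating the strict inequality from $\beta y$ to $y$ gives
\[
\log V(y) - \log V(\beta y) \;>\; -\frac{\gamma}{1-\gamma}\bigl(\log y - \log(\beta y)\bigr) \;=\; \frac{\gamma}{1-\gamma}\log\beta,
\]
equivalently
\[
V(\beta y) \;<\; \beta^{-\gamma/(1-\gamma)}\, V(y).
\]
Setting $y_0 := y_1$ delivers both claims on $(0, y_0)$.

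The only delicate points are (a) the positivity of $V$ on $(0,y_0)$, which is guaranteed by $U(\infty) > 0$ and continuity of $V$ at $0$, so that the rearrangement leading to $yI(y) < \tfrac{\gamma}{1-\gamma}V(y)$ is valid with the correct sign, and (b) the strictness of the integral inequality, which is preserved because the integrand is continuous and the inequality $I(y)/V(y) < \gamma/((1-\gamma)y)$ is strict on a set of positive measure. No other subtlety arises; the argument is essentially a direct translation of the Kramkov--Schachermayer observation that reasonable asymptotic elasticity of $U$ is equivalent to a power-type growth estimate on $V$ near zero.
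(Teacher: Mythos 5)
Your proof is correct, and it reconstructs the standard argument behind the result that the paper itself does not prove but simply recalls from Kramkov--Schachermayer \cite{KS99} (their Lemma 6.3): choose $\gamma\in(AE(U),1)$, transfer the elasticity bound to $V$ via $x=I(y)$ and the identity $V(y)=U(I(y))-yI(y)$ (using $U(\infty)>0$ to keep $V>0$ near $0$), then integrate the resulting differential inequality for $\log V$ to get the power bound. All the delicate points you flag (positivity of $V$ near $0$, strictness under integration) are handled correctly, so there is nothing to add.
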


\begin{proof}[Proof of Proposition \ref{UI}]
 By Lemma \ref{yI<VandV<CV}, we can find a $y_0 > 0$, such that, for all $0< y < y_0$ and sufficiently small $\varepsilon>0$,
  \begin{equation*}
    \begin{aligned}
     0 &\leq\frac{d{\widehat Q}_y^r}{d\massP} I\left((y-\varepsilon)\frac{d{\widehat Q}_y^r}{d\massP} \right)
                 {\boldsymbol{1}}_{\left\{y\frac{d{\widehat Q}_y^r}{d\massP}  < y_0\right\}}  \\
       &= \frac{1}{y-\varepsilon}\frac{y-\varepsilon}{y}y \frac{d{\widehat Q}_y^r}{d\massP} I\left(\frac{y-\varepsilon}{y}y\frac{d{\widehat Q}_y^r}{d\massP} \right)
                 {\boldsymbol{1}}_{\left\{y\frac{d{\widehat Q}_y^r}{d\massP}  < y_0\right\}}  \\
       &\leq \frac{1}{y-\varepsilon}\frac{\gamma}{1-\gamma} V\left(\frac{y-\varepsilon}{y}y\frac{d{\widehat Q}_y^r}{d\massP}\right){\boldsymbol{1}}_{\left\{y\frac{d{\widehat Q}_y^r}{d\massP} < y_0\right\}} \\
       &\leq \frac{ \gamma C}{(y- \varepsilon)(1-\gamma)} \abs{V\left( y  \frac{d{\widehat Q}_y^r}{d\massP} \right)}, 
    \end{aligned}
  \end{equation*}
 where $C = \big(\frac{y-\varepsilon}{y}\big)^{-\frac{\gamma}{1-\gamma}}$.
 Since $I$ is decreasing and positive,
\begin{eqnarray*} 
  0\leq \frac{d{\widehat Q}_y^r}{d\massP} I\left( (y-\varepsilon)  \frac{d{\widehat Q}_y^r}{d\massP} \right){\boldsymbol{1}}_{\left\{y \frac{d{\widehat Q}_y^r}{d\massP} \geq y_0\right\}}
    \leq \frac{d{\widehat Q}_y^r}{d\massP}I\left( \frac {y-\varepsilon}{y} y_0 \right).
 \end{eqnarray*}
Therefore,
$$ 0\leq\frac{d{\widehat Q}_y^r}{d\massP} I \left( (y-\varepsilon)  \frac{d{\widehat Q}_y^r}{d\massP} \right)
    \leq K\abs{V\left( y  \frac{d{\widehat Q}_y^r}{d\massP}\right)} + \frac{d{\widehat Q}_y^r}{d\massP}I\left(\frac{y_0}{2}\right), $$
for some constant $K>0$. 
Since the right-hand side is an element in $L^1$, we obtain the uniform integrability of $ \frac{d{\widehat Q}_y^r}{d\massP} I \Big( (y-\varepsilon)  \frac{d{\widehat Q}_y^r}{d\massP} \Big) $
for sufficiently small $\varepsilon>0$.
\end{proof}

\begin{lemma} \label{v'=-QI+Qe}
 The dual value function is continuously differentiable on $(0,\infty)$, 
 $$ v'(y)= -\left\langle \widehat{Q}_y^r, I\left(y\frac{d\widehat{Q}_y^r}{d\massP}\right)\right\rangle + \langle \widehat{Q}_y, e_T\rangle. $$
\end{lemma}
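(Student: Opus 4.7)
The plan is to sandwich the right and left Dini derivatives of $v$ at an arbitrary $y>0$ between the same quantity
$$ R(y) := -\left\langle \widehat{Q}_y^r, I\Big(y\tfrac{d\widehat{Q}_y^r}{d\massP}\Big)\right\rangle + \langle \widehat{Q}_y, e_T\rangle,$$
by using $\widehat{Q}_y$ as a suboptimal competitor in $v(y\pm\varepsilon)$ and exploiting convexity of $V$. Throughout, write $Z_y := \tfrac{d\widehat{Q}_y^r}{d\massP}$, so that by optimality $v(y) = J(y,\widehat{Q}_y) = \massE[V(yZ_y)] + y\langle\widehat{Q}_y,e_T\rangle$, while $v(y\pm\varepsilon)\leq J(y\pm\varepsilon,\widehat{Q}_y)$.

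For the upper bound on the right derivative, I would write
$$ v(y+\varepsilon)-v(y) \leq \massE\bigl[V((y+\varepsilon)Z_y) - V(yZ_y)\bigr] + \varepsilon\langle\widehat{Q}_y,e_T\rangle, $$
and apply the convex inequality $V(b)-V(a)\leq V'(b)(b-a)$ with $a = yZ_y$, $b=(y+\varepsilon)Z_y$ to get $V((y+\varepsilon)Z_y) - V(yZ_y) \leq -\varepsilon Z_y I((y+\varepsilon)Z_y)$. Dividing by $\varepsilon$ and passing to the limit $\varepsilon\downarrow 0$ then yields $v'_+(y)\leq R(y)$, provided I can justify $\massE[Z_y I((y+\varepsilon)Z_y)]\to\massE[Z_y I(yZ_y)]$. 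Since $I$ is continuous and decreasing on $(0,\infty)$ and $Z_y I((y+\varepsilon)Z_y)\leq Z_y I(yZ_y)$ on $\{Z_y>0\}$, dominated convergence suffices as soon as $\massE[Z_y I(yZ_y)]<\infty$; this finiteness is exactly what Proposition~\ref{UI} provides (its $L^1$ dominating bound for $Z_y I((y-\varepsilon)Z_y)$ with small $\varepsilon>0$ passes to the limit by monotonicity).

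The analogous lower bound on the left derivative is obtained symmetrically. Using $\widehat{Q}_y$ in $v(y-\varepsilon)$ and the convex inequality $V((y-\varepsilon)Z_y) - V(yZ_y) \leq \varepsilon Z_y I((y-\varepsilon)Z_y)$ gives
$$\frac{v(y)-v(y-\varepsilon)}{\varepsilon} \geq -\massE\bigl[Z_y I((y-\varepsilon)Z_y)\bigr] + \langle\widehat{Q}_y,e_T\rangle.$$
Here Proposition~\ref{UI} is essential: for sufficiently small $\varepsilon>0$ the family $\{Z_y I((y-\varepsilon)Z_y)\}$ is uniformly integrable, so that one can pass to the limit $\varepsilon\downarrow 0$ and obtain $v'_-(y)\geq R(y)$.

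Finally, convexity of $v$ (Lemma~\ref{convexvy}) gives $v'_-(y)\leq v'_+(y)$, so the two Dini derivatives coincide with $R(y)$ and $v$ is differentiable at $y$ with $v'(y)=R(y)$. Since $y>0$ was arbitrary, $v$ is differentiable on $(0,\infty)$; continuity of $v'$ is then automatic for a convex function differentiable on an open interval. The main obstacle is the limit-passage in the two bounds, which rests entirely on the uniform-integrability result Proposition~\ref{UI}; the rest is a standard envelope/convexity argument.
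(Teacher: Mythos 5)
Your proposal is correct and takes essentially the same route as the paper: both use $\widehat{Q}_y$ as a suboptimal competitor at $y\pm\varepsilon$ (the paper packages this as the auxiliary convex function $f(z)=J(z,\widehat{Q}_y)$ touching $v$ at $y$), bound the difference quotients via convexity of $V$, invoke Proposition~\ref{UI} to justify the limit passage for the left derivative, and conclude by the convexity sandwich. The only cosmetic difference is your use of dominated convergence (with dominating function $Z_yI(yZ_y)$) for the right derivative where the paper uses a reverse-Fatou argument; the substance is identical.
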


\begin{proof}
  Let $y >0$ be arbitrary. 
  Define $$f(z) := \massE\left[V\left(z\frac{d{\widehat Q}_y^r}{d\massP} \right) \right] + z\left\langle {\widehat Q}_y, e_T  \right\rangle. $$ 
  It is easy to see that $f(z)$ is convex, $f(\cdot) \geq v(\cdot)$ and $f(y) = v(y)$, which implies that 
         $\triangle^- f(y) \leq \triangle^- v(y) \leq \triangle^+ v(y) \leq \triangle^+ f(y)$,
  where $\triangle^{\pm}$ describe the left and the right derivatives, respectively.

  By the convexity of $V(\cdot)$ and the Fatou's lemma, it follows that 
   \begin{equation*}
     \begin{aligned}
      \triangle^+ f(y) &\leq \limsup_{\varepsilon\to 0}\frac{1}{\varepsilon}\massE\left[V\left((y+\varepsilon)\frac{d{\widehat Q}_y^r}{d\massP}\right)
                                                                                   - V\left(y\frac{d{\widehat Q}_y^r}{d\massP}\right)\right]
                                                 + \left\langle {\widehat Q}_y, e_T  \right\rangle \\  
                   &\leq \limsup_{\varepsilon\to 0}\massE\left[\frac{d{\widehat Q}_y^r}{d\massP}V'\left((y+\varepsilon)\frac{d{\widehat Q}_y^r}{d\massP}\right) \right]
                                                 + \left\langle {\widehat Q}_y, e_T  \right\rangle \\ 
                   &\leq \massE\left[\frac{d{\widehat Q}_y^r}{d\massP}V'\left(y\frac{d{\widehat Q}_y^r}{d\massP}\right) \right]
                                                 + \left\langle {\widehat Q}_y, e_T  \right\rangle \\                           
                    &= - \left\langle {\widehat Q}_y^r,  
                                         I \left( y  \frac{d{\widehat Q}_y^r}{d\massP}\right) \right\rangle
                        + \left\langle {\widehat Q}_y, e_T  \right\rangle. 
     \end{aligned}
   \end{equation*}
On the other side, by Proposition \ref{UI}, we can apply Fatou's lemma again, and it follows that
 \begin{equation*} 
  \begin{aligned}
    \triangle^- f(y) &\geq \liminf_{\varepsilon\to 0}\massE\left[-\frac{d{\widehat Q}_y^r}{d\massP}I\left((y-\varepsilon)\frac{d{\widehat Q}_y^r}{d\massP}\right)\right]+\left\langle{\widehat Q}_y, e_T\right\rangle \\
                     &\geq - \left\langle {\widehat Q}_y^r, I\left(y\frac{d{\widehat Q}_y^r}{d\massP}\right)\right\rangle + \left\langle\widehat{Q}_y, e_T\right\rangle.
  \end{aligned}
 \end{equation*}
Thus, $\triangle^- f(y) = \triangle^- v(y) =v'(y) = \triangle^+ v(y) = \triangle^+ f(y)$.

  By strict convexity, $v(\cdot)$ is continuously differentiable. 
\end{proof}

\begin{lemma}  \label{inadav}
In particular,
 $$v'(0+)=-\infty, \quad v'(\infty)\in\left[\inf_{Q\in\cD^\lambda}\langle Q,e_T\rangle, \sup_{Q\in\cD^\lambda}\langle Q,e_T\rangle \right].$$
\end{lemma}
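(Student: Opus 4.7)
The plan is to combine the explicit formula for $v'$ from Lemma~\ref{v'=-QI+Qe} with a sandwich argument relating $v$ to the endowment-free dual value function, together with elementary convex-analytic identities.

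To prove $v'(0+) = -\infty$, I would first identify $v(0+)$. From \eqref{v(y)geqV(y)-y} we have $v(y) \geq V(y) - y\rho$, so $\liminf_{y\to 0+} v(y) \geq V(0) = U(\infty)$. In the case $U(\infty) = +\infty$, this lower bound alone forces $v(0+) = +\infty$, and convexity of $v$ rules out any finite right-derivative at $0$ (otherwise $v$ would be bounded near $0$), so $v'(0+) = -\infty$. In the case $U(\infty) < \infty$, I would fix some $Q_0 \in \cM^\lambda$ with strictly positive density $\xi_0 := dQ_0/d\massP$ and use $v(y) \leq J(y, Q_0) = \massE[V(y\xi_0)] + y\langle Q_0, e_T\rangle$; since $V(y\xi_0) \leq V(0) = U(\infty) < \infty$, dominated convergence gives $\massE[V(y\xi_0)] \to U(\infty)$, so $v(0+) = U(\infty)$. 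The tangent inequality $V(y\xi_0) \leq V(0) - y\xi_0 I(y\xi_0)$ coming from convexity of $V$ then yields
$$\frac{v(y) - v(0+)}{y} \leq -\massE[\xi_0 I(y\xi_0)] + \langle Q_0, e_T\rangle,$$
and monotone convergence, together with $I(y\xi_0) \uparrow I(0+) = \infty$ on the full-measure set $\{\xi_0 > 0\}$, drives the right-hand side to $-\infty$. The convex-analytic identity $v'(0+) = \inf_{y>0}[v(y) - v(0+)]/y$, valid whenever $v(0+) < \infty$, then produces $v'(0+) = -\infty$.

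For $v'(\infty)$, the upper bound is immediate: dropping the non-positive term $-\langle \widehat{Q}_y^r, I(y\,d\widehat{Q}_y^r/d\massP)\rangle$ in Lemma~\ref{v'=-QI+Qe} gives $v'(y) \leq \langle \widehat{Q}_y, e_T\rangle \leq \sup_{Q\in\cD^\lambda}\langle Q, e_T\rangle$, and passing to $y \to \infty$ concludes. For the lower bound I would introduce the endowment-free dual
$$\widetilde{v}(y) := \inf_{Q \in \cD^\lambda}\massE\left[V\left(y \frac{dQ^r}{d\massP}\right)\right]$$
and observe that replacing $\langle Q, e_T\rangle$ by its uniform lower bound $\inf_{Q'\in\cD^\lambda}\langle Q', e_T\rangle$ in the infimum defining $v(y)$ yields
$$v(y) \geq \widetilde{v}(y) + y\inf_{Q\in\cD^\lambda}\langle Q, e_T\rangle.$$
Dividing by $y$, letting $y\to\infty$, and using the convex asymptotic $v(y)/y \to v'(\infty)$ and $\widetilde{v}(y)/y \to \widetilde{v}'(\infty)$, I obtain $v'(\infty) \geq \widetilde{v}'(\infty) + \inf_{Q\in\cD^\lambda}\langle Q, e_T\rangle$. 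It then suffices to invoke $\widetilde{v}'(\infty) = 0$, the dual-Inada property for the endowment-free transaction-cost problem ensured by $I(\infty) = 0$ and Assumption~\ref{U(x)assumption} (cf.~\cite[Theorem~3.2]{CS15duality}; the lower bound $\widetilde{v}(y) \geq V(y)$ and $V(y)/y \to 0$ also pin down the limit).

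The main obstacle is precisely the identity $\widetilde{v}'(\infty) = 0$: it is itself an endowment-free analog of what we are proving, and must be imported from the transaction-cost duality results in \cite{CS15duality} rather than derived here. All remaining steps are routine exercises in convexity, tangent-line inequalities, and monotone/dominated convergence, essentially identical to their frictionless counterparts in \cite{CSW01}.
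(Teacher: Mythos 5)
Your proof is correct, and its first half is essentially the paper's own argument: identify $v(0+)=V(0+)=U(\infty)$, dispose of the case $U(\infty)=\infty$ by convexity, and otherwise bound the difference quotient by $-\massE[\xi_0 I(y\xi_0)]+\mathrm{const}$ and let monotone convergence drive it to $-\infty$. Your explicit choice of a strictly positive $\xi_0 = Z_T^0$ with $Z_T^0\in\cM^\lambda$ is in fact slightly more careful than the paper's ``for all $Q\in\cD^\lambda$'', since the blow-up requires a nontrivial regular part. For $v'(\infty)$ the routes diverge. The paper writes $v'(\infty)=\lim_{y\to\infty}v(y)/y$ by l'H\^opital, brackets this between $K+\inf_{Q\in\cD^\lambda}\langle Q,e_T\rangle$ and $K+\sup_{Q\in\cD^\lambda}\langle Q,e_T\rangle$ with $K=\lim_{y\to\infty}\frac{1}{y}\inf_{Q\in\cD^\lambda}\massE\bigl[V\bigl(y\,\tfrac{dQ^r}{d\massP}\bigr)\bigr]$, and kills $K$ with the elementary estimate $-V(y)\leq C_\varepsilon+\varepsilon y$ (a consequence of $I(\infty)=0$); nothing is imported from \cite{CS15duality}. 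Your lower bound is the same decomposition, and the ``main obstacle'' you flag is not one: for the lower bound you only need $\liminf_{y\to\infty}\widetilde{v}(y)/y\geq 0$, which, as you yourself note, follows from Jensen's inequality and $V(y)/y\to 0$ --- precisely the paper's in-line computation of $K$. Your upper bound, however, is genuinely different and cleaner: discarding the nonnegative term $\bigl\langle\widehat{Q}_y^r, I\bigl(y\,\tfrac{d\widehat{Q}_y^r}{d\massP}\bigr)\bigr\rangle$ in Lemma \ref{v'=-QI+Qe} gives $v'(y)\leq\sup_{Q\in\cD^\lambda}\langle Q,e_T\rangle$ directly, with no need to control $K$ from above. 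The trade-off is that your route leans on the differentiability formula of Lemma \ref{v'=-QI+Qe} (and hence on Proposition \ref{UI}), whereas the paper's $v'(\infty)$ argument uses only convexity of $v$ and crude bounds on $V$.
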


\begin{proof}
 From \eqref{v(y)geqV(y)-y}, we have $v(0+) \geq V(0+)$.
 On the other hand, by the definition of $v(\cdot)$ and the decrease of $V(\cdot)$, we have that, for any $Q\in\cD^\lambda$,
  \begin{equation*} 
    v(y) \leq \massE\left[V\left(y\frac{d{ Q}^r}{d\massP}\right)\right]+ y\left\langle {Q}, e_T \right\rangle 
         \leq V(0+) + y \rho , 
  \end{equation*}
  which implies $v(0+) \leq V(0+)$.
 Hence $v(0+) = V(0+) = U(\infty)$.
 We only need to consider the case that $U(\infty) < \infty$, indeed, if $U(\infty) = \infty$, we get $v(0+) = \infty$, and it follows trivially $v'(0+) = -\infty$.

 By the convexity of $v$ and $V$, \eqref{v(y)geqV(y)-y}, we have 
  \begin{equation*} 
   \begin{aligned}
     v'(0+) &\leq \frac{v(y)-v(0+)}{y} \leq \frac{\massE\left[V\left(y\frac{d{Q}^r}{d\massP}\right)-V(0+)\right] + y\rho}{y} 
             \leq -\massE\left[\frac{d{Q}^r}{d\massP} I\left(y\frac{d{ Q}^r}{d\massP}\right)\right] + \rho,
   \end{aligned}
  \end{equation*}
  for all $y > 0 $ and $Q \in \mathcal D^\lambda$. 
 Letting $y\to 0$, we obtain $v'(0+) = -\infty$ by monotone convergence theorem. 

 By the definition of $v(\cdot)$ and l'H\^opital's rule, we have
 \begin{equation*}
  \begin{aligned}
    v'(\infty) &= \lim_{y\to\infty}\frac{v(y)}{y}
                = \lim_{y\to\infty}\frac{\inf_{Q\in\cD^\lambda}\left\{\massE\left[V\left(y\frac{d{Q}^r}{d\massP}\right)\right] + y\left\langle {Q}, e_T\right\rangle\right\}}{y}  \\
               &\in \left[K +\inf_{Q\in\cD^\lambda}\langle Q,e_T\rangle, K + \sup_{Q\in\cD^\lambda}\langle Q,e_T\rangle  \right],
  \end{aligned}
 \end{equation*}
 where $$ K =  \lim_{y\to\infty}\frac{1}{y}\inf_{Q\in\cD^\lambda}\massE\left[V\left(y\frac{d{ Q}^r}{d\massP}\right)\right]. $$

Since $-V(\cdot)$ is increasing and $I(y)\to 0$ as $y\to\infty$, we have that for all $\varepsilon>0$, there exists $C_{\varepsilon}>0$ 
such that 
$$ -V(y)\leq C_{\varepsilon}+\varepsilon y,  $$
for all $y>0$. Hence 
$$ 0\leq -K = \lim_{y\to\infty}\frac{\sup_{Q\in\cD^\lambda}\massE\left[-V\left(y\frac{d{ Q}^r}{d\massP}\right)\right]}{y} 
          \leq \lim_{y\to\infty}\frac{C_{\varepsilon}+\varepsilon y}{y} = \varepsilon. $$
Consequently, $K=0$ and the claim follows. 
\end{proof}

% \begin{lemma}
%  $$x_0 := -v'(\infty) = \sup_{Q \in \mathcal D} \langle Q,-e_T \rangle .$$
% \end{lemma}
% 
% \begin{proof}
%  We have shown that $|u(x)|<\infty$ for all $x > x_0$. 
%  This implies that there exists $ X\in \mathcal C_0^{\lambda}$ such that $x+X+e_T\geq 0$, 
%  hence $\langle Q, x+X+e_T \rangle \geq 0$, and therefore
%  \begin{eqnarray*}
%   x\geq \langle Q, x \rangle 
%    \geq \langle Q, x \rangle + \langle Q, X \rangle
%    \geq \langle Q, -e_T\rangle
%  \end{eqnarray*}
%  for all $Q\in \mathcal D$, which follows that $$x_0 \geq \sup_{Q \in \mathcal D} \langle Q,-e_T \rangle .$$
%  By lemma \ref{inadav}, the assertion follows.
%  %we have $x_0 = \sup_{Q \in \mathcal D} \langle Q,-e_T \rangle$.
% \end{proof}

Now let us consider the next step, $\inf_{y>0}\{v(y)+xy\}$:

If $x<x_0:=-v'(\infty)$ we have $v'(y)+x<0$ for all $y>0$, hence $\inf_{y>0}\{v(y)+xy\}=-\infty$ and by Lemma \ref{u<infv+xy} we have 
 $$ u(x)\leq \inf_{y>0}\{v(y)+xy\} =-\infty. $$
In this case the optimization problem is trivial. 

For each $x>x_0$, there exists a unique $\widehat{y}>0$, such that $v'(\widehat{y})+x=0$, and $\widehat{y}$ attains the infimum of $\{v(y)+xy\}$.
After having shown the existence of optimizer of the dual problem, we come back to the primal problem. 
For simplicity, denote ${\widehat Q}:={\widehat Q}_{\widehat y}$.
Let us consider $$\widehat{g}:=I\left(\widehat{y}\frac{d\widehat{Q}^r}{d\massP}\right)-x-e_T. $$
Since $I(\cdot)$ is positive, we have that $x+\widehat{g}+e_T>0$ $\massP$-a.s.
It follows from Lemma \ref{v'=-QI+Qe} 
\begin{equation} \label{-x=-Qr,x+X+QseT}
 \begin{aligned}
  -x = v'({\widehat y}) &= -\left\langle \widehat{Q}^r, I\left(\widehat{y}\frac{d\widehat{Q}^r}{d\massP}\right)\right\rangle + \left\langle \widehat{Q}, e_T  \right\rangle \\
                    &= -\left\langle \widehat{Q}^r, x+ \widehat{g} +e_T\right\rangle + \left\langle \widehat{Q}, e_T  \right\rangle   \\
                    &= -\left\langle \widehat{Q}^r, x+ \widehat{g}\right\rangle + \left\langle \widehat{Q}^s, e_T  \right\rangle.  
 \end{aligned}
\end{equation}

The following lemmas will show that $\widehat{g}$ is an element in $\cC^{\lambda}$.

\begin{lemma} \label{Qr-Qsleqx}
 $$ \sup_{Q\in\cD^\lambda}\left\{\langle Q^r,x+\widehat{g}\rangle - \langle Q^s,e_T\rangle \right\}= \langle \widehat{Q}^r,x+\widehat{g}\rangle - \langle\widehat{Q}^s,e_T\rangle = x.$$
\end{lemma}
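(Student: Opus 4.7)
The plan is to read off the second equality directly from \eqref{-x=-Qr,x+X+QseT}, and then to establish the supremum bound by a first-order optimality (variational) argument applied to the convex functional $J(\widehat{y},\cdot)$ at its minimiser $\widehat{Q}$.

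Indeed, the equation \eqref{-x=-Qr,x+X+QseT} rearranges immediately to
\[
\langle \widehat{Q}^r,x+\widehat{g}\rangle-\langle \widehat{Q}^s,e_T\rangle = x,
\]
so the second equality is free. Only the inequality
\[
\langle Q^r,x+\widehat{g}\rangle-\langle Q^s,e_T\rangle \leq x, \qquad Q\in\cD^\lambda,
\]
has to be proved. For this, fix $Q\in\cD^\lambda$ and form the convex combination $Q_\varepsilon:=(1-\varepsilon)\widehat{Q}+\varepsilon Q\in\cD^\lambda$ for $\varepsilon\in[0,1]$. By optimality of $\widehat{Q}$ we have $J(\widehat{y},Q_\varepsilon)\geq J(\widehat{y},\widehat{Q})$, which, dividing by $\varepsilon>0$, rewrites as
\[
\frac{1}{\varepsilon}\,\massE\!\left[V\!\left(\widehat{y}\tfrac{dQ_\varepsilon^r}{d\massP}\right)-V\!\left(\widehat{y}\tfrac{d\widehat{Q}^r}{d\massP}\right)\right]+\widehat{y}\bigl(\langle Q,e_T\rangle-\langle \widehat{Q},e_T\rangle\bigr)\geq 0.
\]

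Next, I would pass to the limit $\varepsilon\downarrow 0$. Pointwise, the convexity of $V$ makes the difference quotient inside the expectation monotonically non-increasing as $\varepsilon\downarrow 0$, with pointwise limit
\[
\widehat{y}\,V'\!\left(\widehat{y}\tfrac{d\widehat{Q}^r}{d\massP}\right)\!\left(\tfrac{dQ^r}{d\massP}-\tfrac{d\widehat{Q}^r}{d\massP}\right)=-\widehat{y}\,I\!\left(\widehat{y}\tfrac{d\widehat{Q}^r}{d\massP}\right)\!\left(\tfrac{dQ^r}{d\massP}-\tfrac{d\widehat{Q}^r}{d\massP}\right).
\]
The value at $\varepsilon=1$ equals $V(\widehat{y}\tfrac{dQ^r}{d\massP})-V(\widehat{y}\tfrac{d\widehat{Q}^r}{d\massP})$, whose positive part is integrable because $V$ is bounded above by $V(0+)=U(\infty)<\infty$ and $\massE[V(\widehat{y}\tfrac{d\widehat{Q}^r}{d\massP})]$ is finite (since $v(\widehat{y})<\infty$). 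Monotone convergence (decreasing version) then yields
\[
\lim_{\varepsilon\downarrow 0}\frac{1}{\varepsilon}\,\massE\!\left[V\!\left(\widehat{y}\tfrac{dQ_\varepsilon^r}{d\massP}\right)-V\!\left(\widehat{y}\tfrac{d\widehat{Q}^r}{d\massP}\right)\right]=-\widehat{y}\,\massE\!\left[I\!\left(\widehat{y}\tfrac{d\widehat{Q}^r}{d\massP}\right)\!\left(\tfrac{dQ^r}{d\massP}-\tfrac{d\widehat{Q}^r}{d\massP}\right)\right],
\]
which is a priori in $[-\infty,\infty)$; the fact that the pre-limit sequence is bounded below by the finite quantity $-\widehat{y}(\langle Q,e_T\rangle-\langle\widehat{Q},e_T\rangle)$ forces the right-hand side to be finite as well, in particular ensuring that $\langle Q^r,I(\widehat{y}\tfrac{d\widehat{Q}^r}{d\massP})\rangle<\infty$.

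Combining, and recalling that $I(\widehat{y}\tfrac{d\widehat{Q}^r}{d\massP})=x+\widehat{g}+e_T$, I obtain
\[
-\langle Q^r,x+\widehat{g}+e_T\rangle+\langle \widehat{Q}^r,x+\widehat{g}+e_T\rangle+\langle Q,e_T\rangle-\langle \widehat{Q},e_T\rangle\geq 0.
\]
Splitting $\langle Q,e_T\rangle=\langle Q^r,e_T\rangle+\langle Q^s,e_T\rangle$ (and the same for $\widehat{Q}$), the $\langle\cdot^r,e_T\rangle$ terms cancel, and using \eqref{-x=-Qr,x+X+QseT} in the form $\langle\widehat{Q}^r,x+\widehat{g}\rangle-\langle\widehat{Q}^s,e_T\rangle=x$ I am left with
\[
-\langle Q^r,x+\widehat{g}\rangle+\langle Q^s,e_T\rangle+x\geq 0,
\]
which is the desired bound $\langle Q^r,x+\widehat{g}\rangle-\langle Q^s,e_T\rangle\leq x$. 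The main technical obstacle is the justification of passing the limit inside the expectation, which is handled by the monotonicity of the difference quotient (convexity of $V$) together with the integrable upper bound coming from $Q_1=Q$ and the finiteness of $J(\widehat{y},Q)$; the remaining steps are essentially algebraic.
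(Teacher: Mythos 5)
Your overall strategy is the same as the paper's: read the second equality off \eqref{-x=-Qr,x+X+QseT}, then perturb the dual minimiser along the segment $Q_\varepsilon=(1-\varepsilon)\widehat{Q}+\varepsilon Q$ and differentiate $J(\widehat{y},\cdot)$ at $\varepsilon=0$. Where you differ is in how the limit is pushed inside the expectation: the paper applies the gradient inequality $V(a)-V(b)\geq (a-b)V'(b)$ at the perturbed point $b=\widehat{y}\frac{dQ_\varepsilon^r}{d\massP}$, bounds the negative part of the resulting integrand by $\frac{d\widehat{Q}^r}{d\massP}\,I\big(\widehat{y}(1-\varepsilon)\frac{d\widehat{Q}^r}{d\massP}\big)$, and invokes the uniform integrability of Proposition \ref{UI} together with Fatou; you instead use monotonicity of convex difference quotients plus the decreasing monotone convergence theorem.

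The gap is in your integrable majorant. You anchor the monotone convergence at $\varepsilon=1$ and claim the positive part of $V\big(\widehat{y}\frac{dQ^r}{d\massP}\big)-V\big(\widehat{y}\frac{d\widehat{Q}^r}{d\massP}\big)$ is integrable ``because $V$ is bounded above by $V(0+)=U(\infty)<\infty$.'' But $U(\infty)<\infty$ is not assumed in this paper --- the proof of Lemma \ref{inadav} explicitly treats the case $U(\infty)=\infty$, and the standard examples ($U=\log$, $U(x)=x^{\alpha}/\alpha$) all have $V(0+)=\infty$. Worse, at $\varepsilon=1$ there is no control at all: $Q$ may be purely singular, in which case $\frac{dQ^r}{d\massP}=0$ and $V\big(\widehat{y}\frac{dQ^r}{d\massP}\big)=V(0)=+\infty$ on a set of positive measure, so your starting term has no integrable positive part. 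The fix is to anchor at some $\varepsilon_0<1$: there $Q_{\varepsilon_0}^r\geq(1-\varepsilon_0)\widehat{Q}^r$, so $V\big(\widehat{y}\tfrac{dQ_{\varepsilon_0}^r}{d\massP}\big)\leq V\big(\widehat{y}(1-\varepsilon_0)\tfrac{d\widehat{Q}^r}{d\massP}\big)$, and Lemma \ref{yI<VandV<CV} (the reasonable-elasticity bound $V(\beta y)<\beta^{-\gamma/(1-\gamma)}V(y)$ near $0$, together with monotonicity of $V$ away from $0$) dominates this by a constant multiple of $\big|V\big(\widehat{y}\tfrac{d\widehat{Q}^r}{d\massP}\big)\big|$ plus a constant, which is integrable since $v(\widehat{y})$ is finite. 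With that repair (or by switching to the paper's Fatou/Proposition \ref{UI} route), the rest of your argument --- including the observation that the finite lower bound forces $\big\langle Q^r,I\big(\widehat{y}\tfrac{d\widehat{Q}^r}{d\massP}\big)\big\rangle<\infty$, and the final algebraic rearrangement --- is correct.
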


\begin{proof}
 Given a $Q \in \cD^\lambda$ which is a convex set, and an $\varepsilon \in (0,1)$, define 
   $$ Q_{\varepsilon} := (1-\varepsilon) \widehat Q + \varepsilon Q \in \cD^\lambda. $$ 
 It follows $Q_{\varepsilon}^r = (1-\varepsilon) \widehat Q^r + \varepsilon Q^r $.
 By the opitimality of $\widehat Q$ and the convexity of $V(\cdot)$, we have
 
 \begin{equation*}
  \begin{aligned}
   0 &\geq \frac{1}{\varepsilon\widehat{y}}\left\{\massE\left[V\left(\widehat{y}\frac{d{\widehat Q}^r}{d\massP}\right)\right] + \widehat{y}\langle\widehat{Q},e_T\rangle 
                                       - \massE\left[V\left(\widehat{y}\frac{d{Q}^r_{\varepsilon}}{d\massP}\right)\right] - \widehat{y}\langle Q_{\varepsilon},e_T\rangle\right\} \\
     &=  \frac{1}{\varepsilon\widehat{y}}\massE \left[V\left(\widehat{y}\frac{d\widehat{Q}^r}{d\massP}\right) - V\left(\widehat{y}\frac{dQ^r_{\varepsilon}}{d\massP}\right)\right] 
                                         + \langle \widehat{Q},e_T\rangle - \langle {Q},e_T\rangle \\
     &\geq  \frac{1}{\varepsilon\widehat{y}}\massE \left[\widehat{y}\left(\frac{d\widehat{Q}^r}{d\massP}- \frac{dQ^r_{\varepsilon}}{d\massP}\right)V'\left(\widehat{y}\frac{dQ^r_{\varepsilon}}{d\massP}\right)\right] 
                                         + \langle \widehat{Q},e_T\rangle - \langle {Q},e_T\rangle \\   
     &= \massE\left[\left(\frac{dQ^r}{d\massP}-\frac{d{\widehat Q}^r}{d\massP}\right) I\left(\widehat{y}\frac{dQ^r_{\varepsilon}}{d\massP}\right)\right]
                + \langle \widehat{Q},e_T\rangle - \langle {Q},e_T\rangle. \\   
  \end{aligned}
 \end{equation*}
 
 We now claim that $\left(\left(\frac{dQ^r}{d\massP}-\frac{d{\widehat Q}^r}{d\massP}\right)I\left(\widehat{y}\frac{dQ^r_{\varepsilon}}{d\massP}\right) \right)^- $ is uniformly integrable. 
 Indeed,  
 \begin{equation*}
  \begin{aligned}
   \left(\left(\frac{dQ^r}{d\massP}-\frac{d{\widehat Q}^r}{d\massP}\right)I\left(\widehat{y}\frac{dQ^r_{\varepsilon}}{d\massP}\right) \right)^-
      \leq \frac{d{\widehat Q}^r}{d\massP}I\left( \widehat y \frac{d{ Q}^r_{\varepsilon}}{d\massP} \right)
      \leq \frac{d{\widehat Q}^r}{d\massP}I\left( \widehat y (1- \varepsilon) \frac{d{ Q}^r_{\varepsilon}}{d\massP} \right),    
  \end{aligned}
 \end{equation*}
 where the last term is uniformly integrable for sufficiently small $\varepsilon$ by Lemma \ref{UI}. 
 Hence we can apply Fatou's lemma, and obtain
 \begin{equation*}
  \begin{aligned}
   0 &\geq \liminf_{\varepsilon\to 0}\massE \left[\left(\frac{d{ Q}^r}{d\massP} - \frac{d{\widehat Q}^r}{d\massP}\right)I\left(\widehat{y}\frac{d{ Q}^r_{\varepsilon}}{d\massP}\right)\right]
                +\langle \widehat{Q},e_T\rangle -\langle {Q},e_T\rangle \\    
     &\geq \massE\left[\left(\frac{dQ^r}{d\massP} - \frac{d{\widehat Q}^r}{d\massP} \right) I \left(\widehat{y}\frac{d{ \widehat Q}^r}{d\massP} \right) \right]
                + \langle \widehat{Q},e_T\rangle - \langle {Q},e_T\rangle \\    
     &= \langle Q^r,x+\widehat{X}\rangle - \langle \widehat{Q}^r,x+\widehat{X}\rangle + \langle \widehat{Q}^s,e_T\rangle-\langle Q^s,e_T\rangle,  
  \end{aligned}
 \end{equation*}
 which implies our assertion. 
\end{proof}

\begin{lemma} 
 $\widehat{g}\in\cC^{\lambda}$.
\end{lemma}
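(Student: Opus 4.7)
The plan is to combine Lemma \ref{Qr-Qsleqx} with the superreplication theorem (Theorem \ref{superreplication}). The heart of the matter is that once we restrict the supremum in Lemma \ref{Qr-Qsleqx} to the subset $\cM^{\lambda}\subseteq\cD^{\lambda,r}$ of genuine martingale densities, the singular part drops out and we obtain exactly the dual characterization of $\cC^{\lambda}$ via consistent price systems.

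First I would record basic regularity of $\widehat{g}$: it is $\cF_T$-measurable (as $\frac{d\widehat{Q}^r}{d\massP}\in L^1(\cF_T)$ and $e_T\in L^{\infty}(\cF_T)$), and since $I\geq 0$ and $\abs{e_T}\leq\rho$, it satisfies the uniform lower bound
\[
 \widehat{g} \geq -x - e_T \geq -(x+\rho),
\]
which in particular makes $\widehat{g}$ admissible as input to Theorem \ref{superreplication}.

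Next, fix any $Z^0_T\in\cM^{\lambda}$ and regard it as an element $Q\in\cD^{\lambda,r}$. Since $Q$ is induced by an element of $L^1(\massP)$, it is countably additive with $Q^s=0$ and $Q^r=Q$. Applying Lemma \ref{Qr-Qsleqx} to this particular $Q$ gives
\[
 \langle Q, x+\widehat{g}\rangle = \langle Q^r, x+\widehat{g}\rangle - \langle Q^s, e_T\rangle \leq x,
\]
hence $\langle Q,\widehat{g}\rangle\leq 0$. Because $\widehat{g}$ is bounded below and $Q$ is countably additive, the pairing $\langle Q,\widehat{g}\rangle$ decomposes as $\massE[Z^0_T\widehat{g}^+]-\massE[Z^0_T\widehat{g}^-]$ (the former by monotone convergence, the latter as an honest $L^1$-integral), and this equals $\massE[Z^0_T\widehat{g}]$. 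The bound above forces in particular $\massE[Z^0_T\widehat{g}^+]<\infty$, so integrability is not an issue. We therefore conclude
\[
 \massE[Z^0_T\widehat{g}] \leq 0 \qquad\text{for every $\lambda$-consistent price system }(Z^0,Z^1).
\]

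Finally, since $\widehat{g}$ is bounded from below and satisfies the preceding pricing inequality against every element of $\cZ^{\lambda}_e(S)$, Theorem \ref{superreplication} applied with $x=0$ yields $\widehat{g}\in\cC^{\lambda}$. The only delicate point I anticipate is the passage from the finitely-additive pairing $\langle Q,\widehat{g}\rangle$ to the classical expectation $\massE[Z^0_T\widehat{g}]$, which requires a brief monotone-convergence argument because $\widehat{g}$ need not be bounded above; this is the sole step where some care is needed, and it is handled cleanly by the lower bound $\widehat{g}\geq -(x+\rho)$ established at the outset.
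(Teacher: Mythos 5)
Your proof is correct, but it takes a genuinely different (and in fact shorter) route than the paper's. Both arguments hinge on Lemma \ref{Qr-Qsleqx}, which gives $\langle Q^r,x+\widehat{g}\rangle-\langle Q^s,e_T\rangle\leq x$ for every $Q\in\cD^{\lambda}$. You restrict this to $Q\in\cM^{\lambda}$, identify the finitely additive pairing with the classical expectation $\massE[Z^0_T\,\cdot\,]$ (your monotone-convergence remark is exactly the right care to take here, and the lower bound $\widehat{g}\geq-(x+\rho)$ makes it work), and then invoke Theorem \ref{superreplication} directly, which is legitimate because that theorem is stated for claims merely bounded from below. The paper instead truncates: it shows $\widehat{g}\wedge n\in L^{\infty}$ satisfies $\langle Q,\widehat{g}\wedge n\rangle\leq 0$ for all $Q\in\cD^{\lambda,r}$, concludes $\widehat{g}\wedge n\in\cC^{\lambda}$ from the $L^{\infty}$-characterization (Lemma \ref{XincC}), and then passes to the limit using the closedness of the payoff set under convergence in measure (citing \cite[Theorem 3.4]{Sch14super}). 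What your approach buys is economy: you avoid both the truncation and the appeal to closedness in measure, using only the full strength of the superreplication theorem for lower-bounded claims; what the paper's approach buys is that it only ever tests bounded claims against the dual set, which is the pattern that generalizes when a superreplication theorem is available only for bounded payoffs. Either way the conclusion $\widehat{g}\in\cC^{\lambda}$ is sound.
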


\begin{proof}
  Firstly, we show that $\widehat{g}\wedge n\in\cC^{\lambda}$ for all $n\in\NN$. 
  
  Since $\widehat{g}$ is uniformly bounded from below, $\widehat{g}\wedge n\in L^{\infty}$. 
  For any $Q\in\cD^{\lambda,r}$, we have $Q^r=Q$. 
  It follows from Lemma \ref{Qr-Qsleqx} and $Q^s=0$ that 
  $$ \langle Q,x+\widehat{g}\wedge n\rangle \leq \langle Q,x+\widehat{g}\rangle \leq x+ \langle Q^s,e_T\rangle = x. $$
  Therefore
  $$ \langle Q,\widehat{g}\wedge n\rangle \leq x- \langle Q,x\rangle = 0, $$
  for all $Q\in\cD^{\lambda,r}$ and $n\in\NN$. 
  By Lemma \ref{XincC}, $\widehat{g}\wedge n\in\cC^{\lambda}$. 
  As, by \cite[Theorem 3.4]{Sch14super}, $\cC^{\lambda}_0$ is closed with respect to convergence in measure, and $\widehat{g}\wedge n\to\widehat{g}$ almost surely, 
    we have $\widehat{g}\in\cC^{\lambda}$. 
\end{proof}

\begin{proof}[Proof of main theorem]
 Since $\widehat{g}\in\cC^{\lambda}$ bounded from below, we have that $\langle \widehat{Q},\widehat{g}\rangle\leq 0$. 
 By \eqref{-x=-Qr,x+X+QseT} and the positivity of $x+\widehat{g}+e_T$, we get
 \begin{equation*}
  \begin{aligned}
   \langle \widehat{Q}, e_T\rangle + x &= \langle \widehat{Q}^r, x+\widehat{g}+e_T\rangle \leq \langle \widehat{Q}, x+\widehat{g}+e_T\rangle  \\
          &\leq \langle \widehat{Q}, e_T\rangle + \langle \widehat{Q}, x\rangle \leq \langle \widehat{Q}, e_T\rangle + x, 
  \end{aligned}
 \end{equation*}
 which implies 
  $$ \langle \widehat{Q}^s, x+\widehat{g}+e_T\rangle = 0, \quad \langle \widehat{Q}, \widehat{g}\rangle = 0, \quad \langle \widehat{Q}, x\rangle =x. $$
 Together with 
  $$ x+\widehat{g}+e_T = I\left(\widehat{y}\frac{d\widehat{Q}^r}{d\massP}\right) $$
  we get equalities instead of inequalities in \eqref{UleqJ+xy}, i.e.,
  $$ \massE[U(x+\widehat{g}+e_T)] = \massE\left[V\left(\widehat{y}\frac{d\widehat{Q}^r}{d\massP}\right)\right] 
                                    + \widehat{y}\langle \widehat{Q},e_T\rangle + x\widehat{y}.  $$
 Hence for $x>x_0$, we have 
 \begin{equation*}
  \begin{aligned}
   u(x) &\geq \massE[U(x+\widehat{g}+e_T)] = \massE\left[V\left(\widehat{y}\frac{d\widehat{Q}^r}{d\massP}\right)\right] 
                                              + \widehat{y}\langle \widehat{Q},e_T\rangle + x\widehat{y} \\
        &\geq v(\widehat{y})+x\widehat{y} = u(x), 
  \end{aligned}
 \end{equation*}
 which shows the optimality of $\widehat{g}\in\cC^{\lambda}$ and \eqref{u=v+xy}. 
 Since $u$ is differentiable, \eqref{v=u-xy} follows from the convex duality theory. 
 
 By the positivity of $x+\widehat{g}+e_T$, we obtain that
 $$ u(x)=\massE[U(x+\widehat{g}+e_T)]> -\infty, $$
 for all $x>x_0$, which implies the existence of an $g\in \mathcal C_0^{\lambda}$ such that $x+g+e_T>0$ almost surely, 
 hence $\langle Q, x+g+e_T \rangle \geq 0$, and therefore 
  \begin{eqnarray*}
   x\geq \langle Q, x \rangle 
    \geq \langle Q, x \rangle + \langle Q, g \rangle
    \geq \langle Q, -e_T\rangle,
  \end{eqnarray*}
  for all $Q\in \mathcal D^\lambda$, which follows that $$x_0 \geq \sup_{Q \in \mathcal D^\lambda} \langle Q,-e_T \rangle .$$
  By lemma \ref{inadav}, we have that
  $$x_0 = \sup_{Q \in \mathcal D^\lambda} \langle Q,-e_T \rangle,$$
   which completes the proof. 
\end{proof}
%we have $x_0 = \sup_{Q \in \mathcal D} \langle Q,-e_T \rangle$.

\bibliography{20131204}

\bibliographystyle{plain}

\end{document}